\newcommand{\revised}[1]{\textcolor{black}{#1}}
\newcommand{\drevised}[1]{\textcolor{black}{#1}}
\renewcommand{\baselinestretch}{1.1}
\begin{document} 
\rhead{\textit{December 11, 2014}}
\lhead{\textit{C.B.\ Hyndman \& X.\ Zhou}}
\chead{\textit{Explicit solutions of quadratic FBSDEs in QTSMs}}

\title{
Explicit solutions of quadratic FBSDEs arising from quadratic term structure models
}

\author{
Cody Hyndman\footnote{Corresponding Author: email: cody.hyndman@concordia.ca}\ \footnote{ 
Department of Mathematics and Statistics, 
Concordia University, 
1455 Boulevard de Maisonneuve Ouest,
Montr\'eal, Qu\'ebec,
Canada H3G 1M8.
}
\ and 
Xinghua Zhou\footnote{
Department of Applied Mathematics,
Western University, London, Ontario, Canada, N6A 5B7
}
}

\date{December 11, 2014}

\maketitle

\abstract{
We provide explicit solutions of certain forward-backward stochastic differential equations (FBSDEs) with quadratic growth.  These particular FBSDEs are associated with quadratic term structure models of interest rates and characterize the zero-coupon bond price.  The results of this paper are naturally related to similar results on affine term structure models of Hyndman~(Math. Financ. Econ. 2(2):107-128, 2009) due to the relationship between quadratic functionals of Gaussian processes and linear functionals of affine processes.  Similar to the affine case  a sufficient condition for the explicit solutions to hold is the solvability in a fixed interval of Riccati-type ordinary differential equations.  However, in contrast to the affine case, these Riccati equations are easily  associated with those occurring in linear-quadratic control problems.  We also consider quadratic models for a risky asset price and characterize the futures price and forward price of the asset in terms of similar FBSDEs.  An example is considered, using an approach based on stochastic flows that is related to the FBSDE approach,  to further emphasize the parallels between the affine and quadratic models.  An appendix discusses solvability and explicit solutions of the Riccati equations.
}

\noindent
\textbf{Keywords:}
Quadratic term-structure models;  forward-backward stochastic differential equations; zero coupon bond price; quadratic price model; futures price; forward price; Riccati equations.

\vspace{5mm}
\noindent
\textbf{JEL Classification:} E43, G12, G13 

\vspace{5mm}
\noindent
\textbf{Mathematics Subject Classification (2000):} 60G35, 60H20, 60H30, 91B28, 91B70

\vfill
\pagebreak

\renewcommand{\baselinestretch}{1.5}

\section{Introduction}

An important class of term-structure models are  affine
term-structure models (ATSMs).  The defining characteristic of an
ATSM is that the price at time $t\in[0,T]$ of a unit face value $T$-maturity
zero-coupon bond, denoted  by $P(t,T)$,  is an  exponential-affine function of
an $n$-dimensional factor process $X_t$.  That is, for times $0\leq t\leq T$
\begin{equation*}
P(t,T,X_t)=\exp{\left(B(t,T)\transpose X_t+C(t,T)\right)}
\end{equation*} 
where $B(t,T)$ is an $n\times 1$ vector and $C(t,T)$ is a scalar.  
As a result, the yield of the bond is an affine function of
the factor process.  The class of ATSMs includes the models of
\citet{Vas77}, \citet*{CIR}, \citet{Duffee}, \citet{MR1994043}, and many others.
Despite several attractive properties ATSMs have been 
demonstrated to have some empirical
limitations.  For example, \citet{Dai} show that ATSMs fail to
capture certain aspects of the distribution of swap yields,
suggesting ATSMs may omit empirically observed nonlinearities. 
Further, \citet{AhnDong} demonstrate empirically that non-affine term structure models
outperform one-factor affine models.

In order to address the limitations of ATSMs several authors have
proposed the use of quadratic term-structure models (QTSMs). 
In a QTSM zero-coupon bond prices are exponential-quadratic functions 
of the factor process $X_t$ for times $0\leq t\leq T$
\begin{equation*}
P(t,T,X_t)=\exp{ \left( X_t\transpose A(t,T) X_t+  B(t,T)\transpose
X_t+C(t,T)\right) }
\end{equation*}
 where $A(t,T)$ is a non-singular $n\times n$
matrix, $B(t,T)$ is an $n\times 1$ vector, and $C(t,T)$ is a scalar.
\citet{QuadraticTheoryEvidence} introduce the comprehensive QTSMs and
study the characteristics of these models. Pricing problems associated with
QTSMs have been studied by \citet{QuadraticTermStructure} and
\citet{Markus}. Other relevant research on QTSMs includes
\citet{Pseudodiffusions} and \citet{Eigen} which provide further evidence that
 QTSMs can capture nonlinearities between economic factors and provide more
flexibility when constructing models when compared to ATSMs. 
Moreover, as shown by
\citet{QuadraticTermStructure} and \citet{Markus,Markus2}, QTSMs are analytically tractable as the prices of European style options can be calculated by Fourier
transform methods similar to ATSMs.  \citet{raquel} also considers quadratic term structures for bond, futures, and forward prices.

In this paper we consider  QTSMs using two nontraditional, but related, approaches to pricing problems.  The first approach, and our main focus, is based on forward-backward stochastic differential equations (FBSDEs), which we henceforth refer to the \textit{FBSDE approach} and was previously introduced in the context of ATSMs in \cite{thesis,Cody2007CIR,Cody2009}.  By first characterizing the factor process and bond price in terms of the solution of coupled nonlinear FBSDEs and then demonstrating existence, uniqueness, and explicit solutions of the the FBSDEs the pricing problem is solved. The key result of the FBSDE approach is the extension of a technique due to 
\citet{Yong} of a method for proving the existence, uniqueness, and explicit solution of certain coupled linear FBSDEs to the nonlinear FBSDEs which characterize the bond pricing problem in ATSMs.  The same techniques were employed to characterize futures prices and forward prices in affine price models (APMs) in \cite{Cody2009}.  In this paper we extend the FBSDE approach to the bond pricing problem in the context of QTSMs and futures and forward prices of quadratic price models (QPMs) for a risky asset.  We \revised{obtain} results characterizing these \revised{prices} which are similar to the ATSM case and in particular provide new examples of quadratic FBSDEs with explicit solutions.

The second approach, which we briefly consider, is based on the stochastic flows method 
studied by \citet{Elliott}, \citet{Martino}, and \citet{hyndman:Gauss,Cody2009}. This method gives a closed-form solution to the
pricing problems for certain ATSMs.  \citet{Geman} and \citet{Yor} have shown that the CIR process is
a Bessel process under certain restrictions, which means that the
CIR process and QTSMs are equivalent in certain
cases. Motivated by this fact, we extend the techniques of the
stochastic flows approach and the FBSDE approach to QTSMs.

The paper is organized as follows. Section~\ref{Sec:1} briefly introduces the
modelling framework and notation.  Section~\ref{sec:2} reviews the  FBSDE approach 
for the zero-coupon bond price and extends the results of \citet{Cody2009} from ATSMs to QTSMs.  Section~\ref{sec:fut-fwd} considers models where a risky asset price is an exponential quadratic function of the factor process (QPM), which includes the zero-coupon bond price, and applies the FBSDE approach to the futures price and forward price. Section~\ref{sec:flows} briefly considers the stochastic flows approach to QTSMs.  We give an explicit solution for the zero-coupon bond price \revised{for} our model in the one-dimensional case based on the flows method. Section~\ref{sec:conc} concludes and the Appendix considers the solvability of certain matrix Riccati-type differential equations which \revised{give sufficient} conditions for the main results of this paper.

\section{Preliminaries and notation}\label{Sec:1}

We shall begin our analysis on the risk-neutral filtered probability space 
$\filtprobspace{F}{}{Q}$ for $0\leq t\leq T^*$ where $T^*$ is the fixed and finite investment horizon,
\{$\filtration{t}{}\}$ is a right-continuous and complete filtration satisfying the usual conditions,
and $Q$ is the risk-neutral (martingale) measure. Under these assumptions, as in \citet[p. 411]{Steve}, the
price of the zero-coupon bond at time $t$ for maturity $T\leq T^*$
is given by
\begin{equation}\label{BondPrice}
P(t,T)= \condexpectation{Q}{ \exp{\left( -\int_{t}^{T} r_u \ du \right) }}{\filtration{t}{}}
\end{equation}
where $r_{t}$ is the  instantaneous riskless interest rate.
It is possible to calculate the conditional expectation~(\ref{BondPrice}) in several different ways after specifying the dynamics of the riskless interest rate.

On the risk-neutral probability space $\filtprobspace{F}{}{Q}$ assume that the
riskless interest rate is a function of an $\real^{n}$-valued, 
$\{\filtration{t}{}\}$-adapted state process $X_{t}$ given as the solution to the Gaussian 
stochastic differential equation (SDE)
\begin{equation}\label{FactorModel}
dX_t= (AX_t+B)dt+\sigma dW_t, \quad X_{0}=x_{0}
\end{equation} with $A = [a_{i,j}]$ an $(n\times n)$-matrix, $B=[\tilde{b}_{i}]\transpose$ an $(n\times 1)$-column vector, $\sigma=[\sigma_{i,j}]$ an $(n\times n)$-matrix, and $W_t = [W_t^{(1)} , \cdots , W_t^{(n)}]\transpose$ a standard Brownian motion with respect to $(\filtration{t}{},Q)$.
We assume that the riskless interest rate $r_t$ is given by a quadratic function of the factors process.
\begin{assumption} \label{ModelA}
The instantaneous riskless interest rate process is given by $r_{t} = r(X_{t})$ where $X_{t}$ is the strong solution to equation~(\ref{FactorModel}) and, for $x\in\real^{n}$,
\begin{equation}
r(x)=x\transpose \Gamma x + R x + k, \label{eq:ModelA}
\end{equation} 
where $\Gamma = [\gamma_{i,j}]$ is a positive semidefinite $(n\times n)$-matrix, $R = [r_{1}, \ldots, r_{n}]$ is an $(1\times n)$-row vector, and $k$ is a scalar such that
\begin{equation}
k \geq \frac{1}{4}R\Gamma^{-1}R\transpose . \label{eq:pos_cond}
\end{equation}

\end{assumption}
One of the common criticisms of Gaussian short rate models, such as the \cite{Vas77} model, is the potential for producing negative interest rates.  However, in this case since $\Gamma$ is positive semidefinite the lower bound of $r(x)$
from equation~(\ref{eq:ModelA}) is $(k-\frac{1}{4}R\Gamma^{-1}R\transpose)$
when $x=-\frac{1}{2}\Gamma^{-1}R\transpose$.
Therefore, the model given by Assumption~\ref{ModelA}, with the restriction of equation~(\ref{eq:pos_cond}), 
produces a nonnegative instantaneous interest rate process $r_{t}=r(X_{t})$ for $t\geq 0$.

Assumption~\ref{ModelA} and equation~(\ref{BondPrice}) may now be used to extend the forward-backward stochastic differential equation (FBSDE) approach to term structure modelling for ATSMs of \citet{Cody2009} to QTSMs.  The main difference from \cite{Cody2009} is that in this paper the dynamics of the factor process are given by a Gaussian (rather than affine) process and the
riskless interest rate is a quadratic (rather than affine) functional of the factors process.  
The extension is motivated by the fact that the sum of squared components of an $n$-dimensional Ornstein-Uhlenbeck process can be identified with a CIR process as shown in \citet[pp.\ 271-273]{MR2098795} and more generally by the relationships between Brownian motion and squared Bessel (BESQ) processes described by \citet{Geman} and \citet{Yor}.

\section{Connections between QTSMs and FBSDEs} \label{sec:2}
We briefly review the derivation of the forward-backward stochastic differential equation which characterizes the factor process and the bond price by considering the processes
\begin{equation*}
H_s =\exp{\left(-\int_{0}^{s}r(X_u)\ du\right)} \label{H}
\end{equation*}
and  
\begin{equation}
V_s
=\condexpectation{Q}{\exp{\left(-\int_{0}^{T}r(X_u)\ du\right)}}{\filtration{s}{}}
\label{V}
\end{equation}
for $s\in[0,T]$.  Assumption~\ref{ModelA} is in force throughout, however, provided the process $V_{s}$ defined by equation~(\ref{V}) is a $(\filtration{t}{},Q)$-martingale the characterization is valid  more generally whatever the dynamics of the factors process or functional dependence of the risk-free rate on the factors. 

Since $H_{s}$ is $\filtration{s}{}$-measurable from equation~(\ref{BondPrice}) it is clear that $P(s,T) = (V_{s}/H_{s})$. 
By the Martingale Representation Theorem  \citet[Theorem 5.4.2]{Steve} 
there exists an $\filtration{s}{}$-adapted process $J_s=[J_s^{(1)}, \cdots , J_s^{(n)}]$, expressed as an $(1\times n)$ vector process, such that
\begin{equation}
V_s=V_0+\int_{_0}^{s} J_{u} \ dW_u .  \label{MartingaleRepresentationTheoremV}
\end{equation}
Since $H_s$ is of finite variation and thus satisfies the dynamics
\begin{equation*}%
dH_s=-r(X_s)H_s \ ds .
\end{equation*}
It\^o'{s} formula gives that $Y_s= (V_s/H_s)$ satisfies
\begin{equation*}
Y_s = Y_0+\int_{0}^{s} r(X_u) Y_{u} \ du+\int_{0}^{s} Z_u  \ dW_u \label{DynamicsOfY}
\end{equation*} 
where $Z_u = ({J_u}/{H_u})$.  Subtracting the dynamics of $Y_T$ from (\ref{DynamicsOfY}) we have
\begin{equation*}
Y_s=Y_T - \int_{s}^{T} r(X_u) Y_{u} \ du - \int_{s}^{T} Z_u \ dW_u. %
\end{equation*}
Since $Y_T$ is the $T$-maturity zero-coupon bond price at time
$T$, we have from equation~(\ref{BondPrice}) that $Y_T=P(T,T)=1$. Therefore, 
on the risk neutral probability space $\filtprobspace{F}{}{Q}$, we have that for $s\in[0,T]$ the process $(X_s,Y_s,Z_{s})$ satisfies the system
\begin{align}
X_s=& X_0+\int_{0}^{s}(AX_u+B)\ du+\int_{0}^{s}\sigma \ dW_u \label{FBSDEsRN} \\
Y_s=& 1 - \int_{s}^{T} r(X_u) Y_{u} \ du-\int_{s}^{T} Z_u \ dW_u \label{FBSDEsRNb}.
\end{align}

Equations (\ref{FBSDEsRN})-(\ref{FBSDEsRNb}) constitute a
forward-backward stochastic differential equation (FBSDE).  The characterization presented demonstrates the existence and uniqueness of an adapted solution $(X_s, Y_s, Z_s)$ of the FBSDE~(\ref{FBSDEsRN})-(\ref{FBSDEsRNb}) as defined in 
\citet{Peng} (see also \revised{\citet{Karoui}} and  \citet{Ma} for a discussion of FBSDEs).  However, due to the fact that $Z_{\cdot}$ arises from an application of the Martingale Representation Theorem the solution is not known in closed form.  In order to derive a closed form solution we next apply a change of measure  and consider the dynamics of the FBSDE under the new measure.  Then, adapting a method of \citet{Yong} for the case of linear FBSDEs to the resulting nonlinear FBSDE, we are able to prove existence and uniqueness as well as provide an explicit solution.

Recall the definition of the forward measure:
\begin{definition}\label{Numeraire}
Let the zero-coupon bond be the num\'eraire.  Define
\begin{equation}
\Lambda_T  = {P(0,T)}^{-1}\exp{\left(-\int_{0}^{T} r(X_u) \ du\right)}.
\label{eq:11}
\end{equation}
Then the $T$-forward measure $Q^T$ is
defined by 
\begin{equation*}
Q^T(A):= \int_{A} \Lambda_{T} \ {dQ},
\end{equation*}
$\forall A\in\filtration{T}{}$.
\end{definition}

Define 
$\Lambda_t=E[\Lambda_T|\mathcal{F}_t]$ and note that 
\begin{align}\Lambda_t=& 
 E\bigg[P(0,T)^{-1}\text{exp}\bigg(-\int\limits_0\limits^T r(X_u) \ du
\bigg)\bigg|\mathcal{F}_t\bigg]
= V_0^{-1}E\bigg[\text{exp}\bigg(-\int\limits_0\limits^T r(X_u)\ du
\bigg)\bigg|\mathcal{F}_t\bigg]=V_0^{-1}V_t\label{Lambdat}.
\end{align}
Substitute equation (\ref{Lambdat}) into equation
(\ref{MartingaleRepresentationTheoremV}) to find
\begin{equation}
V_0\Lambda_s=V_0\Lambda_0+\int\limits_0\limits^s J_u\ dW_u\label{eq1}.
\end{equation}
Dividing both sides of (\ref{eq1}) by $V_0$ the dynamics of $\Lambda_{s}$ are given by
\begin{align}
\Lambda_s=&\Lambda_0+\int\limits_0\limits^s
\frac{J_u}{V_0}dW_u 
= 1+\int\limits_0\limits^s
\frac{J_u}{V_0}\frac{H_u}{H_u}\frac{V_u}{V_u} \ dW_u %
= 1+\int\limits_0\limits^s Y_u^{-1}Z_u\Lambda_u \ dW_u. \nonumber
\end{align}
Then, by Girsanov's Theorem, the process $\{W_{t}^{T}\}_{0\leq t \leq T}$ defined by
\begin{equation}
W^T_t = W_t -\int\limits_0\limits^t \frac{Z_u\transpose }{Y_u}du
\nonumber
\end{equation}
is a standard Brownian \revised{m}otion
under the forward measure $Q^T$. Therefore, under the forward
measure $Q^T$, the FBSDE (\ref{FBSDEsRN})-(\ref{FBSDEsRNb}) becomes
\begin{align*}
X_s &= X_0 + \int\limits_0\limits^s(AX_u+B+\sigma\frac{Z_u\transpose}{Y_u})du
    + \int\limits_0\limits^s\sigma \ dW_u^T \\ %
Y_s &= 1-\int\limits_s\limits^T \left( Y_u r(X_u)+\frac{Z_uZ_u\transpose}{Y_u}\right)du 
    - \int\limits_s\limits^TZ_u \ dW_u^T \\ %
\end{align*}
for $s\in [0,T]$.
In particular for the QTSM, with $r(X_t)$ given by (\ref{ModelA}), we have for $s\in[0,T]$
\begin{align}
X_s &= X_0 + \int\limits_0\limits^s(AX_u+B+\sigma\frac{Z_u\transpose }{Y_u})du
     +\int\limits_0\limits^s\sigma \ dW_u^T  \label{FBSDEsFWModelA}  \\
Y_s &= 1 - \int\limits_s\limits^T \left(Y_u [X_u\transpose \Gamma X_u+RX_u+k]
     +\frac{Z_uZ_u\transpose }{Y_u}\right) du
     -\int\limits_s\limits^TZ_u \ dW_u^T \label{FBSDEsFWModelAb}.
\end{align}

The FBSDE~(\ref{FBSDEsFWModelA})-(\ref{FBSDEsFWModelAb}) is nonlinear, including two quadratic terms in the driver of the BSDE, and fully coupled.  Further, given the quadratic term $(X_u\transpose \Gamma X_u)$ in the driver of the BSDE it does not fall into the class considered in \cite{Cody2009}.  According to \cite{richter:2012} there are very few examples where explicit solutions to quadratic BSDEs are available.  Obviously, the results of \citet{thesis,Cody2007CIR,Cody2009} provide examples.  We \revised{shall} prove, independent of the construction already presented, that the FBSDE~(\ref{FBSDEsFWModelA})-(\ref{FBSDEsFWModelAb}) provides another example.  Similar to the results of \citet{thesis,Cody2007CIR,Cody2009}  the solution is explicit up to the solution of a Riccati type ODE.  \revised{The Riccati equations and sufficient conditions for solvability are as follows.
\begin{lemma}\label{lemmaA}
If $\Gamma\in\real^{n\times n}$ is positive semidefinite the Riccati-type differential equations 
\begin{align} 
0 &= \frac{d}{dt} R_2(t)+(R\transpose _2(t)+R_2(t))A-\Gamma+\frac{1}{2}(R_2(t)+R_2\transpose (t))\sigma\sigma\transpose (R_2(t)+R_2\transpose (t)), & R_2(T)=0_{n\times n}
 \label{Riccati2} 
\\
0 &= \frac{d}{dt} R_1(t)+R_1(t)A+B\transpose (R_2(t)+R_2\transpose (t))-R+R_1(t)\sigma\sigma\transpose (R_2(t)+R_2\transpose (t)),
 & R_1(T)=0_{1\times n}
 \label{Riccati2b} 
\end{align}
admit unique solutions $R_2(\cdot)\in\real^{n\times n}$, $R_1(\cdot)\in\real^{1\times n}$ for all 
$t\in [0,T]$.
\end{lemma}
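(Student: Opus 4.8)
The plan is to exploit a decoupling hidden in the system. Observe that the coefficients of equation~(\ref{Riccati2}) depend on $R_2$ only through the symmetric combination $S(t) := R_2(t) + R_2\transpose(t)$. Adding (\ref{Riccati2}) to its own transpose and using $\Gamma\transpose = \Gamma$ and $(\sigma\sigma\transpose)\transpose = \sigma\sigma\transpose$ yields a \emph{closed} symmetric matrix Riccati equation for $S$,
\begin{equation*}
\tfrac{d}{dt}S(t) + S(t)A + A\transpose S(t) - 2\Gamma + S(t)\sigma\sigma\transpose S(t) = 0, \qquad S(T) = 0_{n\times n}.
\end{equation*}
Once $S$ is known on $[0,T]$, equation~(\ref{Riccati2}) becomes $\tfrac{d}{dt}R_2(t) = \Gamma - S(t)A - \tfrac12 S(t)\sigma\sigma\transpose S(t)$, whose right-hand side is a known continuous function of $t$, and equation~(\ref{Riccati2b}) becomes a \emph{linear} ODE in $R_1$ with continuous, time-varying coefficients. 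Thus the whole lemma reduces to establishing global existence and uniqueness of $S$ on the fixed interval $[0,T]$.

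For $S$, local existence and uniqueness of a maximal solution is immediate from the Picard--Lindel\"of theorem, since the right-hand side is quadratic and hence locally Lipschitz in $S$. The only genuine difficulty is ruling out finite-time blow-up, to which Riccati equations are notoriously prone; this is exactly where the hypothesis that $\Gamma$ is positive semidefinite is used and where the connection to linear--quadratic control enters. Reversing time via $\tau = T - t$ and setting $Q(\tau) := -S(T-\tau)$ transforms the equation into the standard control Riccati form
\begin{equation*}
\tfrac{d}{d\tau}Q(\tau) = A\transpose Q(\tau) + Q(\tau)A - Q(\tau)\sigma\sigma\transpose Q(\tau) + 2\Gamma, \qquad Q(0) = 0_{n\times n},
\end{equation*}
in which the quadratic term carries the stabilizing (negative) sign and the data $2\Gamma$ and $Q(0)$ are positive semidefinite.

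The crux is then an a~priori bound sandwiching $Q$ in the Loewner order on its interval of existence. A lower bound $Q(\tau) \succeq 0$ follows from the monotone dependence of the Riccati flow on its data (equivalently, from the value-function representation of the associated linear--quadratic problem): comparing with the solution generated by zero cost and zero initial value, which is identically $0$, gives $Q \succeq 0$. For an upper bound, the term $-Q\sigma\sigma\transpose Q = -(Q\sigma)(Q\sigma)\transpose$ is always negative semidefinite, so $\tfrac{d}{d\tau}Q \preceq A\transpose Q + QA + 2\Gamma$; comparison with the linear (Lyapunov) equation $\tfrac{d}{d\tau}\bar Q = A\transpose \bar Q + \bar Q A + 2\Gamma$, $\bar Q(0)=0$, whose solution $\bar Q(\tau) = \int_0^\tau e^{A\transpose(\tau - s)}\,2\Gamma\, e^{A(\tau-s)}\,ds$ exists for all $\tau$, yields $0 \preceq Q(\tau) \preceq \bar Q(\tau)$. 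This uniform bound on $[0,T)$ precludes blow-up, so the maximal solution extends to all of $[0,T]$; equivalently one may simply invoke the classical global existence theorem for the linear--quadratic control Riccati equation. I expect this no-blow-up step to be the main obstacle, since everything else is either a reduction or a routine integration.

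Finally, I would reverse the reductions. With $S$ now defined on $[0,T]$, set $R_2(t) := -\int_t^T \big(\Gamma - S(u)A - \tfrac12 S(u)\sigma\sigma\transpose S(u)\big)\,du$; integrating the symmetric equation and using $S(T)=0$ shows that $R_2(t) + R_2\transpose(t) = S(t)$, so this $R_2$ indeed solves (\ref{Riccati2}), and the fundamental theorem of calculus makes it the unique such solution. Substituting $S$ into (\ref{Riccati2b}) gives a linear ODE for $R_1$ with continuous coefficients on the compact interval $[0,T]$, which possesses a unique solution there. Uniqueness for the coupled system follows because any solution $R_2$ forces $S = R_2 + R_2\transpose$ to solve the uniquely solvable symmetric Riccati equation, after which $R_2$ and $R_1$ are pinned down by quadrature and by linear theory respectively.
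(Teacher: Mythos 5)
Your proposal is correct, and its overall architecture matches the paper's: both isolate the symmetric combination $S(t)=R_2(t)+R_2\transpose(t)$ (the paper works with $U(t)=-\tfrac12 S(t)$, which satisfies the classical LQC Riccati equation~(\ref{eq:gb_re1})), reduce the recovery of the full nonsymmetric $R_2$ to a quadrature (your direct integral formula is equivalent to the paper's symmetric-plus-skew decomposition $R_2=-(U+V)$ with $V$ given by~(\ref{eq:V_inteq})), and then treat~(\ref{Riccati2b}) as a linear ODE with continuous coefficients. Where you genuinely diverge is at the crux you correctly identified: ruling out finite-time blow-up of the symmetric Riccati solution. You do this by a self-contained comparison argument, sandwiching the time-reversed solution $0\preceq Q(\tau)\preceq\bar Q(\tau)$ between zero and the solution of a Lyapunov equation in the Loewner order, which precludes escape in finite time. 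The paper instead invokes Theorem~\ref{th:Riccati}\textit{(ii)} (a special case of \citet[Theorem~B.1]{gombani2012arbitrage}), representing $U(t)=Y(t)X(t)^{-1}$ via the associated Hamiltonian system~(\ref{eq:App_ODEsys}); the positive/negative semidefiniteness hypotheses are what guarantee invertibility of $X(t)$ on $[0,T]$. Your route is more elementary and avoids the external citation, at the cost of being non-constructive; the paper's route is chosen because the Hamiltonian representation yields the explicit matrix-exponential formula for $R_2$ recorded in Remark~\ref{rem:App}, which the authors need for their claim of \emph{explicit} solvability. One small point of care: your argument silently uses $\Gamma\transpose=\Gamma$ (so that $\Gamma+\Gamma\transpose=2\Gamma$ and the quadrature for $R_2$ reproduces $S$ upon symmetrization); this is consistent with the paper's convention that positive semidefinite matrices are symmetric, but is worth stating since the Appendix deliberately allows nonsymmetric $\Upsilon$.
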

\begin{proof}
The nonsymmetric matrix Riccati equation~(\ref{Riccati2}) is a special case of the more general equations considered in the Appendix if we set $\Upsilon = \Gamma$ and $\Theta = 0_{n\times n}$ in equations~(\ref{eq:App_RE1})-(\ref{eq:App_RE2}) of Theorem~\ref{th:Riccati}.  Therefore, as $\Gamma$ is positive semidefinite  we have $(\Upsilon+\Upsilon\transpose)$ is positive semidefinite and $(\Theta + \Theta\transpose)$ is negative semidefinite.  That is, the conditions of Theorem~\ref{th:Riccati} are satisfied so $R_{2}(t)$ given by (\ref{Riccati2}) has a unique solution  on the interval $[0,T]$.  The solution of equation~(\ref{Riccati2b}) then follows from Corollary~\ref{CorollaryA} with $\theta = 0$ and $\Psi=R$.
\end{proof}
}
Considering the logarithm of the BSDE (\ref{FBSDEsFWModelAb}) simplifies the proof of the main result.
By It\^o's formula from $s$ to $T$ we have
\begin{equation}\label{eq2} \log Y_s =
-\int\limits_{s}\limits^T\{\frac{1}{2}\frac{Z_uZ_u\transpose}{Y_u^2}+X_u\transpose \Gamma
X_u+RX_u+k\}du - \int\limits_s\limits^T\frac{Z_u}{Y_u}dW^T_u.
\end{equation}
\begin{theorem}\label{TheoremFBA}
\revised{If $\Gamma\in\real^{n\times n}$ is positive semidefinite} the  
FBSDE~(\ref{FBSDEsFWModelA})-(\ref{FBSDEsFWModelAb}) admits a 
unique adapted solution $(X_t,Y_t,Z_t)$ for all $t\in[0,T]$ given by 
\begin{align}
X_t &=X_0+\int\limits_0\limits^t\{[A+\sigma\sigma\transpose (R_2(u)+R_2\transpose (u))]X_u+B+\sigma\sigma\transpose R_1\transpose (u)\}du+\int\limits_0\limits^t\sigma dW_u^T,\label{TheoremX}\\
Y_t &= \text{exp} \bigg( X_t\transpose  R_2(t)X_t+R_1(t) X_t+ R_0(t)\bigg),\label{TheoremY}\\
Z_t &=[X_t\transpose (R_2(t)+R_2\transpose
(t))+R_1(t)]\sigma Y_{t} %
\label{TheoremZ}
\end{align}
where \revised{$R_{2}(t)$ is the solution of equation~(\ref{Riccati2}), $R_{1}(t)$ is the solution of equation~(\ref{Riccati2b}), and}
$R_0(t)$ is given by 
\begin{equation}
\revised{R_{0}(t)} = -\int_{t}^{T}\left( k- R_1(s) B -\frac{1}{2}R_1(s)\sigma\sigma\transpose [R_1(s)]\transpose  - \trace{\sigma\transpose R_{2}(s) \sigma} \right) \ ds .
\label{TheoremR0}
\end{equation}
\end{theorem}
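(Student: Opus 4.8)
The plan is to prove the theorem by \emph{verification}: rather than constructing a solution from scratch, I would posit the exponential--quadratic ansatz (\ref{TheoremY}) for $Y_t$ --- the natural candidate given the QTSM structure and the shape of the driver in (\ref{FBSDEsFWModelAb}) --- together with the associated process (\ref{TheoremZ}) for $Z_t$, and then show that once the deterministic coefficients $R_2,R_1,R_0$ are chosen to satisfy (\ref{Riccati2}), (\ref{Riccati2b}) and (\ref{TheoremR0}), the triple $(X_t,Y_t,Z_t)$ solves (\ref{FBSDEsFWModelA})--(\ref{FBSDEsFWModelAb}). By Lemma~\ref{lemmaA} the Riccati equations admit unique solutions $R_2(\cdot),R_1(\cdot)$ on all of $[0,T]$, and $R_0(\cdot)$ is then well defined by the quadrature (\ref{TheoremR0}); in particular $R_2(T)=0$, $R_1(T)=0$ and $R_0(T)=0$, so the candidate automatically satisfies the terminal condition $Y_T=\exp(0)=1$.

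Next I would decouple the forward dynamics. Substituting (\ref{TheoremZ}) into the forward drift of (\ref{FBSDEsFWModelA}) and using $\sigma Z_u\transpose/Y_u = \sigma\sigma\transpose[(R_2(u)+R_2\transpose(u))X_u+R_1\transpose(u)]$ turns (\ref{FBSDEsFWModelA}) into the \emph{linear} SDE (\ref{TheoremX}) with deterministic, continuous (hence bounded on $[0,T]$) coefficients. This SDE has a unique strong solution $X_t$, which is Gaussian; I would take this $X_t$ as the forward component and \emph{define} $Y_t,Z_t$ by (\ref{TheoremY})--(\ref{TheoremZ}), so that the forward equation (\ref{FBSDEsFWModelA}) holds identically by construction.

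The heart of the proof is then verifying the backward equation, for which I would work with the logarithmic form (\ref{eq2}). Writing $\log Y_t = X_t\transpose R_2(t)X_t+R_1(t)X_t+R_0(t)$ and applying It\^o's formula under $Q^T$ along the dynamics (\ref{TheoremX}), the diffusion coefficient of $d\log Y_t$ is $[X_t\transpose(R_2(t)+R_2\transpose(t))+R_1(t)]\sigma$, which reproduces (\ref{TheoremZ}), while matching the finite-variation part against the driver of (\ref{eq2}) --- after using $\tfrac12 Z_tZ_t\transpose/Y_t^2 = \tfrac12[(R_2+R_2\transpose)X_t+R_1\transpose]\transpose\sigma\sigma\transpose[(R_2+R_2\transpose)X_t+R_1\transpose]$ --- yields three identities, one for the quadratic form in $X_t$, one for the term linear in $X_t$, and one for the constant term. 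I expect these to collapse to exactly (\ref{Riccati2}), (\ref{Riccati2b}) and $\dot R_0 = k-R_1B-\tfrac12R_1\sigma\sigma\transpose R_1\transpose-\trace{\sigma\transpose R_2\sigma}$ respectively, the last integrating to (\ref{TheoremR0}). The main obstacle, and the step requiring the most care, is the bookkeeping in this coefficient matching, because $R_2$ is \emph{nonsymmetric}: one must repeatedly use $x\transpose M x = x\transpose\tfrac12(M+M\transpose)x$ to see that the quadratic-form identity is satisfied precisely by the nonsymmetric equation (\ref{Riccati2}), and track the partial cancellation of the quadratic driver term $\tfrac12 Z_tZ_t\transpose/Y_t^2$ against the contribution of the modified forward drift, which reduces the coefficient of $(R_2+R_2\transpose)\sigma\sigma\transpose(R_2+R_2\transpose)$ from $1$ to $\tfrac12$. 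Positive semidefiniteness of $\Gamma$ enters only through Lemma~\ref{lemmaA}, guaranteeing that $R_2,R_1$ do not blow up on $[0,T]$.

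Finally, for uniqueness I would not re-run a fixed-point argument but transfer it from the risk-neutral formulation: the $Q$-FBSDE (\ref{FBSDEsRN})--(\ref{FBSDEsRNb}) has a unique adapted solution (as already noted, $X$ is the unique strong solution of its SDE, $V_s=\condexpectation{Q}{\exp(-\int_0^T r(X_u)\,du)}{\filtration{s}{}}$ is uniquely determined, and $Z$ is fixed by the Martingale Representation Theorem), and since $Q^T\sim Q$ the Girsanov change of measure is a bijection between adapted solutions of (\ref{FBSDEsRN})--(\ref{FBSDEsRNb}) and of (\ref{FBSDEsFWModelA})--(\ref{FBSDEsFWModelAb}). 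Hence the explicit triple constructed above is the unique adapted solution. Along the way I would record the mild integrability check --- finiteness of the relevant moments of the Gaussian $X_t$ and of $Y_t,Z_t$ --- needed to ensure that $(X,Y,Z)$ is an admissible solution and that $\int Z\,dW^T$ is a genuine $Q^T$-martingale.
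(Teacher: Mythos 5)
Your verification half is essentially the paper's own argument: posit the exponential--quadratic ansatz, reduce the forward equation to the linear SDE (\ref{TheoremX}) via the identity $Z_t/Y_t=[X_t\transpose (R_2(t)+R_2\transpose (t))+R_1(t)]\sigma$, apply It\^o's formula to $\log Y_t$, and match the quadratic, linear and constant coefficients against (\ref{Riccati2}), (\ref{Riccati2b}) and (\ref{TheoremR0}); your accounting of the reduction of the $(R_2+R_2\transpose)\sigma\sigma\transpose(R_2+R_2\transpose)$ coefficient from $1$ to $\tfrac12$ and of the symmetrization $x\transpose Mx=\tfrac12 x\transpose(M+M\transpose)x$ is exactly what the paper's computation does, and Lemma~\ref{lemmaA} is invoked in the same way.

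The uniqueness half, however, has a genuine gap. The paper does \emph{not} transfer uniqueness from the risk-neutral formulation. For an \emph{arbitrary} adapted solution $(X,Y,Z)$ of (\ref{FBSDEsFWModelA})--(\ref{FBSDEsFWModelAb}) it builds the comparison pair $\bar Y_t=\exp\left(X_t\transpose R_2(t)X_t+R_1(t)X_t+R_0(t)\right)$ and $\bar Z_t=[X_t\transpose(R_2(t)+R_2\transpose(t))+R_1(t)]\sigma\bar Y_t$ from that solution's own forward component, subtracts the two logarithmic BSDEs, and obtains the purely quadratic BSDE (\ref{TheoremHatY}) for $(\hat Y,\hat Z)$ with zero terminal value, whose unique adapted solution is $(0,0_{1\times n})$ by \citet[Theorem~2.3]{Kobylanski}; once $Z/Y=\bar Z/\bar Y$ is known, the forward equation collapses to (\ref{TheoremX}) and $X$ is pinned down too. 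Your alternative route asserts two things that are not justified. First, that the $Q$-FBSDE (\ref{FBSDEsRN})--(\ref{FBSDEsRNb}) has a unique adapted solution ``because $V$ is a conditional expectation and $Z$ comes from martingale representation'': that construction exhibits \emph{one} solution; it does not show that every adapted solution of the backward equation $Y_s=1-\int_s^T r(X_u)Y_u\,du-\int_s^T Z_u\,dW_u$ coincides with it. The coefficient $r(X_u)$ is an unbounded random process (a quadratic of a Gaussian), so standard Lipschitz BSDE uniqueness does not apply and a separate comparison or integrability argument would be needed. Second, that Girsanov yields a bijection between the solution sets under $Q$ and $Q^T$: for an arbitrary adapted solution of the coupled $Q^T$-system the Girsanov kernel is the unknown process $Z_u/Y_u$, and you have no a priori control guaranteeing that the associated stochastic exponential is a true, uniformly integrable martingale, so the measure change cannot be inverted solution-by-solution. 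This is precisely the difficulty the paper's difference-BSDE/Kobylanski argument is designed to sidestep, since the difference has bounded (zero) terminal data irrespective of the integrability of the candidate solution. To repair your proof, replace the transfer argument by the paper's comparison construction, or supply the missing uniqueness theorem for the linear BSDE with unbounded coefficient together with a Novikov-type bound on $Z/Y$.
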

\begin{proof}
First, we must show that $(X,Y,Z)$ given by (\ref{TheoremX})-(\ref{TheoremZ}) satisfy the FBSDE~(\ref{FBSDEsFWModelA})-(\ref{FBSDEsFWModelAb}).  %
Dividing (\ref{TheoremZ}) by
(\ref{TheoremY}), we have\begin{equation}\label{TheoremZ/Y}
\frac{Z_t}{Y_t}=[X_t\transpose (R_2(t)+R_2\transpose (t))+R_1(t)]\sigma .
\end{equation}
Substituting (\ref{TheoremZ/Y}) into (\ref{TheoremX}), we obtain the
dynamics of $X_t$ given by equation (\ref{FBSDEsFWModelA}). So $X_t$
given by (\ref{TheoremX}) satisfies the SDE (\ref{FBSDEsFWModelA}).

Consider the function $f(t,x)=\text{exp}\bigg( x\transpose R_2(t)x+R_1(t)x+R_0(t)\bigg)$.
Applying It\^{o}'s formula to $f(t,x)$ using the dynamics of $X_t$ in (\ref{TheoremX}) we find that $Y_{t}=f(t,X_{t})$ and $Z_{t}$ defined by 
(\ref{TheoremY}) and (\ref{TheoremZ}) satisfy
\begin{align}
dY_t&=\{X_t\transpose
\dot{R}_2(t)X_t+\dot{R}_1(t)X_t+\dot{R}_0(t)+X_t\transpose
(R_2(t)+R_2\transpose (t))A
X_t\nonumber\\
&+B\transpose (R_2(t)+R_2\transpose (t))X_t+[X_t\transpose (R_2(t)+R_2\transpose (t))+R_1(t)]\sigma\frac{Z_t\transpose }{Y_t}+R_1(t)AX_t  %
+R_1(t)B
+\trace{\sigma\transpose R_{2}(t) \sigma}
\nonumber\\
&+\frac{1}{2}[X_t\transpose (R_2(t)+R_2\transpose (t))+R_1(t)]\sigma\sigma\transpose [(R_2\transpose (t)+R_2(t))X_t+R_1\transpose (t)]\}Y_tdt\nonumber\\
&+Y_t[X_t\transpose (R_2(t)+R_2\transpose (t))+R_1(t)]\sigma
dW_t^T\label{eq7}
\end{align}
where $\dot{R}_{k}(t) = \frac{d}{dt}R_{k}(t)$, $k=0,1,2$.
Substitute (\ref{TheoremZ/Y}) in (\ref{eq7}) to find
\begin{align}
dY_t&=\{X_t\transpose [\dot{R}_2(t)+(R_2(t)+R_2\transpose (t))A+\frac{1}{2}(R_2(t)+R_2\transpose (t))\sigma\sigma\transpose (R_2\transpose (t)+R_2(t))]X_t\nonumber\\
&+[\dot{R}_1(t)+B\transpose (R_2(t)+R_2\transpose (t))+R_1(t)A+R_1(t)\sigma\sigma\transpose (R_2\transpose (t)+R_2(t))]X_t\nonumber\\
&+\frac{Z_tZ_t\transpose }{Y_t^2}+\dot{R}_0(t)+R_1(t)B
+ \trace{\sigma\transpose R_{2}(s) \sigma}
+\frac{1}{2}R_1(t)\sigma\sigma\transpose R_1\transpose
(t)\}Y_tdt+Z_tdW_t^T.\label{eq8}
\end{align}
Substituting equations (\ref{Riccati2})-(\ref{Riccati2b}) and (\ref{TheoremR0}) into equation (\ref{eq8})
gives that 
$Y_t$ defined by (\ref{TheoremX})-(\ref{TheoremZ})
satisfies
\begin{equation*}
Y_t=Y_T-\int\limits_t\limits^T \{Y_u [X_u\transpose \Gamma
X_u+RX_u+k]+\frac{Z_uZ_u\transpose
}{Y_u}\}du-\int\limits_t\limits^TZ_udW_u^T.
\end{equation*} 
By the boundary conditions of (\ref{Riccati2})-(\ref{Riccati2b}) and (\ref{TheoremR0}) at $t=T$ we have, from equation (\ref{TheoremY}), that
\begin{equation*} Y_T=\exp{(X_T\transpose R_2(T)X_T+R_1(T)X_T+R_0(T) )}
=\exp{(X_T\transpose 0_{n\times n}X_T+0_{1\times n}X_T+0)}=1.
\end{equation*}
Hence $(X,Y,Z)$ given by (\ref{TheoremX})-(\ref{TheoremZ}) satisfy the FBSDE
(\ref{FBSDEsFWModelA})-(\ref{FBSDEsFWModelAb}).

Second, we prove the uniqueness of the solution. Let $(X, Y,Z)$
be any adapted solution of the FBSDE (\ref{FBSDEsFWModelA})-(\ref{FBSDEsFWModelAb}). Define
\begin{align*} \log\bar{Y}_t&= X_t\transpose
R_2(t)X_t+R_1(t)X_t+R_0(t),\\
\bar{Z}_t&=[X_t\transpose (R_2(t)+R_2\transpose
(t))+R_1(t)]\sigma\bar{Y}_t,
\end{align*}
then 
\begin{equation} 
   \frac{\bar{Z}_t}{\bar{Y}_t}=[X_t\transpose (R_2(t)+R_2\transpose (t))+R_1(t)]\sigma.\label{eq9}
\end{equation}
Apply  It\^{o}'s formula to  $f(t,x)=x\transpose R_2(t)x+R_1(t)x+R_0(t)$.
where $X_t$ is given by (\ref{FBSDEsFWModelA}) to find
\begin{align}
df(t,X_t)&=d\log\bar{Y}_t=  \{ X_t\transpose \dot{R}_2(t)X_t+\dot{R_1}(t)X_t+\dot{R}_0(t) \nonumber\\
&+X\transpose _t(R_2\transpose (t)+R_2(t))AX_t+R_1(t)AX_t+B\transpose (R_2(t)+R_2\transpose (t))X_t+R_1(t)B \nonumber \\
&+[X_t\transpose (R_2\transpose (t)+R_2(t))+R_1(t)]\sigma\frac{Z_t\transpose }{Y_t}
+\trace{\sigma\transpose R_{2}(s) \sigma} \} \ dt
+[X_t\transpose (R_2\transpose (t)+R_2(t))+R_1(t)]\sigma
dW_t^T.\label{eq10}
\end{align}
Substitute equations (\ref{Riccati2}), (\ref{Riccati2b}), (\ref{TheoremR0}) and (\ref{eq9}) into equation
(\ref{eq10}) to find 
\begin{align*} d\log\bar{Y}_t&=\{X_t\transpose
\Gamma X_t+ RX_t+k-\frac{1}{2}\frac{\bar{Z}_t\bar{Z}_t\transpose
}{\bar{Y}_t^2}+\frac{\bar{Z}_t}{\bar{Y}_t}\frac{Z_t\transpose
}{Y_t}\}dt+\frac{\bar{Z}_t}{\bar{Y}_t}dW_t^T
\end{align*}
and
$\log\bar{Y}_T=0.$
So \begin{equation}
\log\bar{Y}_t=-\int\limits_t\limits^T\{X_u\transpose \Gamma X_u+R
X_u+k-\frac{1}{2}\frac{\bar{Z}_u\bar{Z}_u\transpose
}{\bar{Y}_u^2}+\frac{\bar{Z}_u}{\bar{Y}_u}\frac{Z_u\transpose
}{Y_u}\}du-\int\limits_t\limits^T\frac{\bar{Z}_u}{\bar{Y}_u}dW_u^T.\label{eq11}
\end{equation}
Subtract equation (\ref{eq11}) from equation (\ref{eq2}) to find
\begin{align}\label{eq12} \log
Y_t-\log\bar{Y}_t&=-\int\limits_t\limits^T\{\frac{1}{2}\frac{Z_uZ_u\transpose }{Y_u^2}-\frac{\bar{Z}_uZ_u\transpose }{\bar{Y}_uY_u}+\frac{1}{2}\frac{\bar{Z}_u\bar{Z}_u\transpose }{\bar{Y}_u^2}\}du-\int\limits_t\limits^T\{\frac{Z_u}{Y_u}-\frac{\bar{Z}_u}{\bar{Y}_u}\}dW_u^T\nonumber\\
&=-\frac{1}{2}\int\limits_t\limits^T\{(\frac{Z_u}{Y_u}-\frac{\bar{Z}_u}{\bar{Y}_u})(\frac{Z_u\transpose
}{Y_u}-\frac{\bar{Z}_u\transpose}{\bar{Y}_u})\}du-\int\limits_t\limits^T(\frac{Z_u}{Y_u}-\frac{\bar{Z}_u}{\bar{Y}_u})dW_u^T.
\end{align}
Define 
\begin{align*}
\hat{Y}_t=&\log Y_t-\log \bar{Y}_t\\
\hat{Z}_t=&\frac{Z_t}{Y_t}-\frac{\bar{Z}_t}{\bar{Y}_t}.
\end{align*}
Then equation (\ref{eq12}) becomes
\begin{equation}\label{TheoremHatY}
\hat{Y}_t=-\frac{1}{2}\int\limits_t\limits^T\hat{Z}_u\hat{Z}_u\transpose
du-\int\limits_t\limits^T\hat{Z}_udW_u^T.
\end{equation}
By the result of \revised{\citet[Theorem~2.3]{Kobylanski}},
the BSDE (\ref{TheoremHatY})
admits \revised{the} unique adapted solution $(\hat{Y}_t,\hat{Z}_t)=(0,
0_{1\times n})$. So we have $Y_t=\bar{Y}_t$ and $Z_t=\bar{Z}_t$.
This means that any adapted solution $(X, Y, Z)$ of the FBSDE
(\ref{FBSDEsFWModelA})-(\ref{FBSDEsFWModelAb}) must satisfy (\ref{TheoremX}),
(\ref{TheoremY}) and (\ref{TheoremZ}).
\end{proof}

Since  $Y_t=P(t,T)$ a simple corollary of Theorem
\ref{TheoremFBA} gives that the zero coupon bond price is an exponential quadratic function of the factor process.
\begin{corollary}\label{cor3}
If the factor process is given by (\ref{FactorModel}) and the short
rate process is represented by Assumption~\ref{ModelA}, then the zero coupon bond price
has \revised{an} exponential quadratic form,\begin{equation*}
P(t,T)=\text{exp}\{X_t\transpose R_2(t)X_t+R_1(t)X_t+R_0(t)\},
\end{equation*}
where $R_2(t)$, $R_1(t)$ and $R_0(t)$ solve equations
(\ref{Riccati2}), (\ref{Riccati2b}) and (\ref{TheoremR0}), respectively.
\end{corollary}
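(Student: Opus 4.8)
The plan is to derive the corollary directly from Theorem~\ref{TheoremFBA} together with the identification of the bond price with the backward component $Y$ of the FBSDE that was established in the construction preceding the theorem. First I would recall that, by equation~(\ref{BondPrice}) and the definitions of $H_{s}$ and $V_{s}$, we have $P(s,T) = V_{s}/H_{s} = Y_{s}$ for every $s\in[0,T]$. This identification is measure-independent: $Y_{s}$ is a fixed process on the underlying filtered space, so it is unaltered by the passage from $Q$ to the forward measure $Q^{T}$; only its dynamics are re-expressed in terms of $W^{T}$. Consequently the triple $(X_{s},Y_{s},Z_{s})$ arising from the bond-price construction is an adapted solution of the forward-measure FBSDE~(\ref{FBSDEsFWModelA})-(\ref{FBSDEsFWModelAb}).

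Second, I would invoke the existence and uniqueness assertion of Theorem~\ref{TheoremFBA}. Since $\Gamma$ is positive semidefinite under Assumption~\ref{ModelA}, the hypotheses of the theorem hold, so the FBSDE~(\ref{FBSDEsFWModelA})-(\ref{FBSDEsFWModelAb}) possesses a unique adapted solution whose backward component is given explicitly by~(\ref{TheoremY}). Uniqueness therefore forces the bond-price process $Y_{t}=P(t,T)$ to coincide with that explicit expression, and substituting yields $P(t,T)=\exp\{X_{t}\transpose R_{2}(t)X_{t}+R_{1}(t)X_{t}+R_{0}(t)\}$, with $R_{2}$, $R_{1}$, and $R_{0}$ the solutions of~(\ref{Riccati2}), (\ref{Riccati2b}), and (\ref{TheoremR0}), respectively.

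The only point requiring care---and where I would concentrate the argument---is the justification that the $Y$ appearing in the bond-price construction is literally the same object as the unique $Y$ produced by Theorem~\ref{TheoremFBA}, and not merely a process sharing its law. This is secured by the strong (pathwise) uniqueness of the adapted solution, which the theorem's proof establishes through the comparison result \citet[Theorem~2.3]{Kobylanski}: any two adapted solutions of~(\ref{FBSDEsFWModelA})-(\ref{FBSDEsFWModelAb}) have almost surely identical backward components. Since both $P(\cdot,T)$ and the explicit candidate solve the same FBSDE, they agree pathwise and no further estimate is needed. Hence the corollary is immediate, its proof consisting merely in assembling these observations.
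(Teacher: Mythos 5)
Your argument is correct and follows the same route as the paper, which simply observes that $Y_t=P(t,T)$ from the construction in Section~\ref{sec:2} and then reads off the explicit form of $Y_t$ from the uniqueness statement of Theorem~\ref{TheoremFBA}. Your additional care about pathwise uniqueness via \citet[Theorem~2.3]{Kobylanski} is a sound elaboration of a point the paper leaves implicit, but it does not change the substance of the argument.
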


\begin{remark}
The existence and uniqueness of the solution to Riccati-type differential equations (\ref{Riccati2})-(\ref{Riccati2b}) and (\ref{TheoremR0}) is similar to that of  \citet{gombani2012arbitrage} where the QTSM is characterized in terms of a linear-quadratic control (LQC) problem.  Indeed, much more is known about the solvability of the Riccati equations associated with LQC problems then those arising in ATSMs.  We defer \revised{a detailed} discussion of the \revised{explicit} solvability of the Riccati equations arising in the QTSM problem to the Appendix where we consider more general equations which include (\ref{Riccati2})-(\ref{Riccati2b}),  as well as those arising in our results on futures and forward prices, as special cases.  %
\end{remark}

We next consider a model of risky asset prices that allows us to apply similar techniques to characterize the futures price and forward price of the asset.

\section{Quadratic Price Models}\label{sec:fut-fwd}

Consider a risky asset with price given as an exponential quadratic function of the factors process (\ref{FactorModel}).  This class of price processes allows for the consideration of futures and forward contracts on zero-coupon bonds where the bond price is given as in Corollary~\ref{cor3}.  \revised{
\begin{assumption}\label{ModelA2}
Assume that the risk-neutral dynamics of the factor process are given by equation~(\ref{FactorModel}) and that the price of the risky asset $S$ at time $t\in[0,T]$  is given by $S(t,X_{t})$ where 
\begin{equation}
S(t,x) = \exp{\left( x\transpose a(t) x + b(t) x + c(t)\right) } \label{eq:Sfunc}
\end{equation}
where  $a(t)\in\real^{n\times n}$, $b(t)\in\real^{1\times n}$, and $c(t)\in\real$ are continuous on the interval $[0,T]$ and $a(T)$ is \drevised{negative} semidefinite.  
\end{assumption}
}
Equation~(\ref{eq:Sfunc}) defines a quadratic price model (QPM).  We next extend the results of the previous section and \citet{Cody2009} to consider the futures price and forward price associated with a QPM.

\subsection{Futures Prices}
Consider the $T$-futures price of the risky asset $S$ at time $t\in[0,T]$ defined by
$$G(t,T) = E_{Q}[S(T,X_{T})| \filtration{t}{}]$$
where  $S(t,x)$ is given by equation~(\ref{eq:Sfunc}).
Similar to the results of the previous section and the results of \citet{Cody2009} we may characterize the factor process and futures price 
as the solution to a FBSDE.

Define $Y_{t}=G(t,T)$ so that, by the Martingale Representation Theorem, there exists an $\filtration{t}{}$-adapted process $Z_{t}=[Z_{t}^{1}, \ldots , Z_{t}^{n} ]$ such that 
\begin{equation}
Y_{t}=Y_{0} + \int_{0}^{t}Z_{u} \ dW_{u}. \label{eq:fut4'}
\end{equation}
Note that $Z_{u}$ is an $(1\times n)$-vector valued process.  Therefore,
$$Y_{t} - Y_{T} = -\int_{t}^{T} Z_{u} \  dW_{u}.$$
Since $Y_{T}=G(T,T)=S(T,X_{T})$ we have the following BSDE for the futures price
\begin{equation*}
Y_{t} = S(T,X_{T}) - \int_{t}^{T}Z_{u} dW_{u} . \label{eq:fut6}
\end{equation*}
Take $N(\cdot) = \exp{\left(\int_{0}^{\cdot}r(X_{u})du\right)}G(\cdot,T)$ as num\'eraire and define the measure $P^{G}$ by the Radon-Nikodym derivative
\begin{equation}
\Lambda_{T} = \left. \frac{dP^{G}}{dQ}\right|_{\filtration{T}{}}= e^{-\int_{0}^{T}r(X_{u})du}\frac{N(T)}{N(0)}
=\frac{G(T,T)}{G(0,T)}=\frac{S(T,X_{T})}{G(0,T)}.
\label{eq:46b}
\end{equation}
Define $\Lambda_{t} =  E_{Q}[\Lambda_{T}| \filtration{t}{}]$ and note from equations~(\ref{eq:fut4'}) and (\ref{eq:46b}) that
$$\Lambda_{t} = E_{Q}\bigg[\frac{S(T,X_{T})}{G(0,T)}\bigg| \filtration{t}{}\bigg] = \frac{G(t,T)}{G(0,T)} = \frac{Y_{t}}{Y_{0}}.$$
Dividing equation~(\ref{eq:fut4'}) by $Y_{0}$ gives that the dynamics of $\Lambda_{t}$ are 
$$\Lambda_{t} = 1+ \int_{0}^{t}  \frac{Z_{u}}{Y_{0}}\ dW_{u} = 1+ \int_{0}^{t} \frac{Y_{u}}{Y_{0}} \frac{Z_{u}}{Y_{u}}\ dW_{u}
= 1 + \int_{0}^{t} \Lambda_{u} \frac{Z_{u}}{Y_{u}}\ dW_{u}.$$
Girsanov's Theorem then gives that the process $\{W_{t}^{G}\}_{0\leq t\leq T}$ defined by
$$W_{t}^{G} = W_{t} - \int_{0}^{t} \frac{Z_{u} \transpose}{Y_{u}}\ du$$
is a standard $(\filtration{t}{},P^{G})$-Brownian motion.
Writing the dynamics of $X_{t}$ and $Y_{t}$ under $P^{G}$ we obtain, similar to \citet{Cody2009} the following quadratic FBSDE for the futures price
\begin{align}
X_{t} & = X_{0} + \int_{0}^{t} \left[ AX_{u} + B + \sigma \frac{Z_{u}\transpose}{Y_{u}} \right] du + \int_{0}^{t} \sigma \ dW^{G}_{u} \label{eq:fut15} \\
Y_{t} & = S(T,X_{T}) - \int_{t}^{T} \frac{Z_{u}Z_{u}\transpose}{Y_{u}} du - \int_{t}^{T} Z_{u}\ dW_{u}^{G}. \label{eq:fut16}
\end{align}
While the dynamics of $X$ are Gaussian, and hence a special case of those considered in \citet{hyndman:Gauss,Cody2009}, the dynamics for $Y$ differ due to the exponential quadratic form of the terminal condition.  Nevertheless, following the methodology of \citet{Cody2009} and the previous section we are able to prove the following result which gives an explicit solution to the coupled quadratic FBSDE~(\ref{eq:fut15})-(\ref{eq:fut16}).  The proof, which is independent of the construction of the FBSDE, is similar to the proof of Theorem~\ref{TheoremFBA} and the results of \citet{Cody2009} and is therefore omitted. \revised{We first present sufficient conditions for the solvability of the Riccati equations.}
\revised{\begin{lemma} \label{lemmaB}
If $a(T)\in\real^{n\times n}$ is negative semidefinite the Riccati-type differential equations
\begin{align}
0 &= \frac{d}{dt} R^{G}_{2}(t) + \left( [R^{G}_{2}(t)]\transpose + R^{G}_{2}(t)\right) A  + \frac{1}{2}\left( [R^{G}_{2}(t)]\transpose + R^{G}_{2}(t)\right) \sigma\sigma\transpose \left( [R^{G}_{2} (t)]\transpose + R^{G}_{2}(t) \right), & R_{2}^{G}(T) = a(T)
 \label{eq:fut_ricc_1} \\
0 &= \frac{d}{dt} R^{G}_{1}(t) + R^{G}_{1}(t) \left \{ A +  \sigma\sigma\transpose \left( [R^{G}_{2}(t)]\transpose + R^{G}_{2}(t)\right) \right\} + B\transpose  \left( [R^{G}_{2}(t)]\transpose + R^{G}_{2}(t) \right), & R^{G}_{1}(T)=b(T) 
 \label{eq:fut_ricc_2}
\end{align}
admit unique solutions $R^{G}_{2}(t)\in\real^{n\times n}$ and  $R^{G}_{1}(t)\in\real^{1\times n}$ for all $t\in[0,T]$.
\end{lemma}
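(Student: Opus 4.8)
The plan is to follow the template of the proof of Lemma~\ref{lemmaA} almost verbatim, reducing the matrix equation~(\ref{eq:fut_ricc_1}) to the general nonsymmetric Riccati equation of Theorem~\ref{th:Riccati} in the Appendix and then recovering $R^{G}_{1}(\cdot)$ from the associated linear equation of Corollary~\ref{CorollaryA}. The only structural differences from Lemma~\ref{lemmaA} are that the inhomogeneous quadratic source term ($-\Gamma$) is now absent and that the terminal value is $a(T)$ rather than $0_{n\times n}$; both are absorbed by the appropriate choice of the free parameters $\Upsilon$ and $\Theta$ in equations~(\ref{eq:App_RE1})-(\ref{eq:App_RE2}).

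First I would set $\Upsilon = 0_{n\times n}$ and $\Theta = a(T)$, so that the general equations~(\ref{eq:App_RE1})-(\ref{eq:App_RE2}) specialize exactly to (\ref{eq:fut_ricc_1}) with terminal datum $R^{G}_{2}(T)=a(T)$. The hypotheses of Theorem~\ref{th:Riccati} require $(\Upsilon+\Upsilon\transpose)$ to be positive semidefinite and $(\Theta+\Theta\transpose)$ to be negative semidefinite. The first holds trivially since $\Upsilon+\Upsilon\transpose = 0_{n\times n}$. For the second, negative semidefiniteness of the (possibly nonsymmetric) matrix $a(T)$ means $x\transpose a(T)\, x \leq 0$ for all $x\in\real^{n}$, whence $x\transpose(a(T)+a(T)\transpose)\,x = 2\,x\transpose a(T)\,x \leq 0$, so $(\Theta+\Theta\transpose)=a(T)+a(T)\transpose$ is negative semidefinite. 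Theorem~\ref{th:Riccati} then yields a unique solution $R^{G}_{2}(\cdot)\in\real^{n\times n}$ on all of $[0,T]$.

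With $R^{G}_{2}(\cdot)$ in hand and continuous on $[0,T]$, equation~(\ref{eq:fut_ricc_2}) becomes a linear ODE in $R^{G}_{1}$ with continuous (hence bounded) coefficient matrix $A+\sigma\sigma\transpose([R^{G}_{2}]\transpose+R^{G}_{2})$ and forcing $B\transpose([R^{G}_{2}]\transpose+R^{G}_{2})$. Global existence and uniqueness on $[0,T]$ is then immediate from linear ODE theory, which is precisely the content of Corollary~\ref{CorollaryA}; I would invoke it with terminal datum $\theta=b(T)$ and inhomogeneous source $\Psi=0_{1\times n}$, mirroring the use of $\theta=0$, $\Psi=R$ in Lemma~\ref{lemmaA}.

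The main obstacle is not at the level of this lemma but is packaged inside Theorem~\ref{th:Riccati}: a matrix Riccati equation with a genuine quadratic term can have finite escape time, and it is exactly the sign conditions on $(\Upsilon+\Upsilon\transpose)$ and $(\Theta+\Theta\transpose)$ that preclude blow-up and force existence across the whole interval. Here the verification is even easier than in Lemma~\ref{lemmaA}, since $\Upsilon=0_{n\times n}$ makes the positive-semidefinite condition automatic; the only point requiring care is the (standard) translation of the nonsymmetric negative-semidefinite terminal condition $a(T)$ into negative semidefiniteness of $a(T)+a(T)\transpose$, as carried out above.
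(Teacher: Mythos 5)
Your proof is correct and follows essentially the same route as the paper's: both reduce equation~(\ref{eq:fut_ricc_1}) to Theorem~\ref{th:Riccati} with $\Upsilon = 0$ and $\Theta = a(T)$ and obtain $R^{G}_{1}$ from Corollary~\ref{CorollaryA} with $\Psi = 0$ and $\theta = b(T)$. You merely spell out the (trivial) verification of the semidefiniteness hypotheses that the paper leaves implicit.
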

\begin{proof}
The result follows, similar to the proof of Lemma~\ref{lemmaA}, if we set $\Upsilon = 0$ and $\Theta =  a(T)$ in equations~(\ref{eq:App_RE1})-(\ref{eq:App_RE2})  of Theorem~\ref{th:Riccati} and $\Psi=0$ and $\theta = b(T)$ in Corollary~\ref{CorollaryA}.
\end{proof}
}
\begin{theorem}\label{th:fut} 
\revised{If $a(T)\in\real^{n\times n}$ is negative semidefinite and $S(t,x)$ is given by equation~(\ref{eq:Sfunc})} the FBSDE~(\ref{eq:fut15})-(\ref{eq:fut16}) has a unique adapted solution $(X,Y,Z)$ given by
\begin{align*}
X_{t} &= X_{0} + \int_{0}^{t} \left\{ \left(A + \sigma\sigma\transpose  \left( [R^{G}_{2} (u)]\transpose + R^{G}_{2}(u)\right)\right) X_{u} 
      + \left( B + \sigma\sigma\transpose [R^{G}_{1}(u)]\transpose \right)\right\} du  + \int_{0}^{t} \sigma \ dW_{u}^{G} \\
Y_{t} &= \exp{\left( X_{t}\transpose R^{G}_{2}(t) X_{t} + R^{G}_{1}(t) X_{t} + R^{G}_{0}(t) \right)} \\
Z_{t} &=  \left[ X_{t} \transpose \left( [R^{G}_{2}(t)]\transpose  + R^{G}_{t}(t)\right) +R^{G}_{1}(t) \right] \sigma \ Y_{t}
\end{align*}
where \revised{$R_{2}^{G}(t)$ is the solution of equation~(\ref{eq:fut_ricc_1}), $R_{1}^{G}(t)$ is the solution of equation~(\ref{eq:fut_ricc_2}), and}
\begin{equation}
R^{G}_{0}(t) = c(T) + \int_{t}^{T} \left( R^{G}_{1}(u)B + \frac{1}{2}R^{G}_{1}(u)\sigma\sigma\transpose [R^{G}_{1}(u)]\transpose 
+ \trace{\sigma\transpose R^{G}_{2}(u) \sigma} 
\right) 
\ du.
\label{eq:fut_ode}
\end{equation}
\end{theorem}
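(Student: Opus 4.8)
The plan is to follow the two-part template of Theorem~\ref{TheoremFBA} verbatim, with Lemma~\ref{lemmaB} supplying the existence and uniqueness of $R^{G}_{2}$ and $R^{G}_{1}$ and with $R^{G}_{0}$ read off from equation~(\ref{eq:fut_ode}). The only structural departure from the bond-price FBSDE is that the driver of the backward equation~(\ref{eq:fut16}) contains no running term $Y_{u}r(X_{u})$ --- only the quadratic term $Z_{u}Z_{u}\transpose/Y_{u}$ --- while the terminal condition $S(T,X_{T})$ is now a genuine exponential-quadratic rather than the constant $1$. This is precisely why the Riccati equations~(\ref{eq:fut_ricc_1})-(\ref{eq:fut_ricc_2}) carry no $-\Gamma$, $-R$, $-k$ forcing terms and instead inherit the data $a(T),b(T),c(T)$ through their terminal conditions.

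First I would verify that the proposed triple solves the FBSDE. Dividing the candidate $Z_{t}$ by $Y_{t}$ gives $Z_{t}/Y_{t} = [X_{t}\transpose([R^{G}_{2}(t)]\transpose + R^{G}_{2}(t)) + R^{G}_{1}(t)]\sigma$, and substituting this into the forward equation recovers the stated dynamics of $X_{t}$, so the $X$-equation holds by construction. Applying It\^o's formula to $f(t,x) = \exp(x\transpose R^{G}_{2}(t)x + R^{G}_{1}(t)x + R^{G}_{0}(t))$ along these $X$-dynamics produces a drift consisting of a quadratic form in $X_{t}$, a linear piece, and a constant piece; substituting (\ref{eq:fut_ricc_1}) annihilates the quadratic-in-$X$ coefficient, (\ref{eq:fut_ricc_2}) annihilates the linear coefficient, and (\ref{eq:fut_ode}) cancels the constant coefficient against $\trace{\sigma\transpose R^{G}_{2}(t)\sigma}$ and $\frac{1}{2}R^{G}_{1}(t)\sigma\sigma\transpose[R^{G}_{1}(t)]\transpose$, leaving exactly the driver $Z_{t}Z_{t}\transpose/Y_{t}$ of (\ref{eq:fut16}). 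The terminal condition follows from the boundary values in Lemma~\ref{lemmaB} and (\ref{eq:fut_ode}): at $t=T$ we obtain $Y_{T} = \exp(X_{T}\transpose a(T)X_{T} + b(T)X_{T} + c(T)) = S(T,X_{T})$.

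For uniqueness I would reproduce the logarithmic linearization used for Theorem~\ref{TheoremFBA}. Because $Y_{t}>0$, It\^o's formula applied to $\log Y_{t}$ turns (\ref{eq:fut16}) into a backward equation for $\log Y$ analogous to (\ref{eq2}), but with terminal value $\log S(T,X_{T})$ and no $X\transpose\Gamma X + RX + k$ term. Given an arbitrary adapted solution $(X,Y,Z)$, I would define $\bar{Y}_{t},\bar{Z}_{t}$ by the candidate formulas evaluated along the same $X_{t}$, apply It\^o to $\log\bar{Y}_{t}$, substitute the Riccati equations, and subtract the two representations. Setting $\hat{Y}_{t} = \log Y_{t} - \log\bar{Y}_{t}$ and $\hat{Z}_{t} = Z_{t}/Y_{t} - \bar{Z}_{t}/\bar{Y}_{t}$ collapses the difference to the purely quadratic BSDE
\begin{equation*}
\hat{Y}_{t} = -\frac{1}{2}\int_{t}^{T}\hat{Z}_{u}\hat{Z}_{u}\transpose\,du - \int_{t}^{T}\hat{Z}_{u}\,dW_{u}^{G}, \qquad \hat{Y}_{T}=0,
\end{equation*}
whose unique adapted solution is $(\hat{Y},\hat{Z}) = (0,0_{1\times n})$ by \citet[Theorem~2.3]{Kobylanski}. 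Hence $Y=\bar{Y}$ and $Z=\bar{Z}$, forcing any solution to coincide with the candidate.

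The main obstacle I anticipate is not the algebra, which is mechanical, but the integrability and admissibility bookkeeping that the omitted proof glosses over. The candidate $Y_{t}$ is an exponential of a quadratic form in the Gaussian process $X_{t}$, so one must ensure that the futures price $E_{Q}[S(T,X_{T})\mid\filtration{t}{}]$ and the stochastic integrals involved are well defined. Here the hypothesis that $a(T)\in\real^{n\times n}$ is negative semidefinite does double duty: besides guaranteeing solvability of (\ref{eq:fut_ricc_1})-(\ref{eq:fut_ricc_2}) through Lemma~\ref{lemmaB}, it bounds $X_{T}\transpose a(T)X_{T}\leq 0$ so that $S(T,X_{T})\leq\exp(b(T)X_{T}+c(T))$ has finite expectation under the Gaussian law of $X_{T}$. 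The delicate point is checking that $R^{G}_{2}(t)$ stays in a range keeping $f(t,X_{t})$ integrable on all of $[0,T]$ and that the Kobylanski uniqueness theorem applies despite the unbounded terminal data --- this is clean only because the \emph{difference} terminal value $\hat{Y}_{T}=0$ is bounded, which is what ultimately licenses the quadratic-BSDE uniqueness argument.
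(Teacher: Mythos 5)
Your proposal is correct and follows exactly the route the paper intends: the paper omits this proof precisely because it is the same two-step argument as Theorem~\ref{TheoremFBA} (It\^o verification using the Riccati equations~(\ref{eq:fut_ricc_1})--(\ref{eq:fut_ricc_2}) and (\ref{eq:fut_ode}), then uniqueness via the logarithmic transformation reducing the difference to the purely quadratic BSDE handled by \citet[Theorem~2.3]{Kobylanski}), with the only changes being the absence of the $Y_{u}r(X_{u})$ driver term and the exponential-quadratic terminal condition $S(T,X_{T})$. Your closing remarks on integrability and the role of the negative semidefiniteness of $a(T)$ are a sensible addition beyond what the paper records.
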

\begin{corollary}
\revised{Under the conditions of Assumption~\ref{ModelA2} t}he futures price has the exponential quadratic form
$$G(t,T) = \exp{\left( X_{t}\transpose R^{G}_{2}(t) X_{t} + R^{G}_{1}(t) X_{t} + R^{G}_{0}(t) \right)}$$ where 
$R^{G}_{2}(t)$ and  $R^{G}_{1}(t)$ are solutions to equations 
(\ref{eq:fut_ricc_1})-(\ref{eq:fut_ricc_2})  and $R^{G}_{0}(t)$ is given by (\ref{eq:fut_ode}).
\end{corollary}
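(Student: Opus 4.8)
The plan is to mirror the proof of Theorem~\ref{TheoremFBA} essentially line for line, since the futures FBSDE~(\ref{eq:fut15})-(\ref{eq:fut16}) differs from~(\ref{FBSDEsFWModelA})-(\ref{FBSDEsFWModelAb}) in only two respects: the driver of the backward equation carries no linear-in-$Y$ term $Y_u r(X_u)$ (which is why the associated Riccati equations~(\ref{eq:fut_ricc_1})-(\ref{eq:fut_ricc_2}) drop the $\Gamma$ and $R$ inhomogeneities present in~(\ref{Riccati2})-(\ref{Riccati2b})), and the terminal condition is $Y_T = S(T,X_T)$ rather than $Y_T = 1$ (accounting for the nonzero terminal data $R_2^{G}(T)=a(T)$, $R_1^{G}(T)=b(T)$, $R_0^{G}(T)=c(T)$). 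Existence of the solutions $R_2^{G}, R_1^{G}$ used throughout is supplied by Lemma~\ref{lemmaB}. The argument splits into a verification step and a uniqueness step.

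For verification, I would first divide the candidate $Z_t$ by the candidate $Y_t$ to get $Z_t/Y_t = [X_t\transpose([R_2^{G}(t)]\transpose + R_2^{G}(t)) + R_1^{G}(t)]\sigma$ and substitute this into the candidate $X$-dynamics, recovering~(\ref{eq:fut15}) identically. Then I would apply It\^o's formula to $f(t,x) = \exp(x\transpose R_2^{G}(t)x + R_1^{G}(t)x + R_0^{G}(t))$ along the candidate $X_t$. The diffusion coefficient produced is exactly the stated $Z_t$, while the drift is $Y_t$ times a scalar that I would sort into its quadratic, linear and constant parts in $X_t$. Because the futures driver is only $Z_uZ_u\transpose/Y_u$, matching the quadratic part to zero reproduces~(\ref{eq:fut_ricc_1}), matching the linear part to zero reproduces~(\ref{eq:fut_ricc_2}), and matching the constant part (where the It\^o trace term $\trace{\sigma\transpose R_2^{G}(t)\sigma}$ enters) reproduces the ODE~(\ref{eq:fut_ode}) for $R_0^{G}$; with these three substitutions the drift collapses precisely to $Z_tZ_t\transpose/Y_t$. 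The terminal data then give $Y_T = \exp(X_T\transpose a(T)X_T + b(T)X_T + c(T)) = S(T,X_T)$, so~(\ref{eq:fut16}) holds.

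For uniqueness, I would take an arbitrary adapted solution $(X,Y,Z)$, set $\log\bar Y_t = X_t\transpose R_2^{G}(t)X_t + R_1^{G}(t)X_t + R_0^{G}(t)$ and $\bar Z_t = [X_t\transpose([R_2^{G}(t)]\transpose + R_2^{G}(t)) + R_1^{G}(t)]\sigma\bar Y_t$ using the \emph{genuine} $X_t$, and apply It\^o to $\log\bar Y_t$. After substituting~(\ref{eq:fut_ricc_1})-(\ref{eq:fut_ricc_2}) and~(\ref{eq:fut_ode}) the $X$-dependence cancels and $d\log\bar Y_t$ reduces to a drift of the form $-\tfrac{1}{2}(\bar Z_t/\bar Y_t)(\bar Z_t\transpose/\bar Y_t) + (\bar Z_t/\bar Y_t)(Z_t\transpose/Y_t)$. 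Taking the logarithm of the genuine backward equation~(\ref{eq:fut16}) via It\^o and subtracting, the shared terminal value $\log S(T,X_T)$ cancels and one obtains, with $\hat Y_t = \log Y_t - \log\bar Y_t$ and $\hat Z_t = Z_t/Y_t - \bar Z_t/\bar Y_t$, the scalar quadratic BSDE $\hat Y_t = -\tfrac{1}{2}\int_t^T \hat Z_u\hat Z_u\transpose\,du - \int_t^T \hat Z_u\,dW_u^{G}$, whose unique adapted solution is $(0, 0_{1\times n})$ by \citet[Theorem~2.3]{Kobylanski}. Hence $Y = \bar Y$ and $Z = \bar Z$, so any solution coincides with the candidate.

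I expect the only real obstacle to be the bookkeeping in the verification drift: correctly symmetrizing the quadratic form so its coefficient matches~(\ref{eq:fut_ricc_1}) with the pattern $[R_2^{G}]\transpose + R_2^{G}$, tracking the two coincident cross terms that together supply the $\sigma\sigma\transpose$ contribution to~(\ref{eq:fut_ricc_2}), and handling the It\^o second-order trace term that feeds~(\ref{eq:fut_ode}). These are the same manipulations as in Theorem~\ref{TheoremFBA}, and the simplification here (no $r(X)$ term) is exactly what renders the futures Riccati system homogeneous in $\Gamma$ and $R$, which is why the proof can reasonably be omitted once Theorem~\ref{TheoremFBA} and the parallel construction of \citet{Cody2009} are in hand.
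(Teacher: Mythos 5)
Your proposal is correct and reconstructs exactly the argument the paper intends: the paper omits the proof of Theorem~\ref{th:fut} as being "similar to the proof of Theorem~\ref{TheoremFBA}," and your verification/uniqueness scheme (Riccati matching of the quadratic, linear, and constant parts, then the Kobylanski argument applied to $\hat Y = \log Y - \log\bar Y$ with zero terminal value) is precisely that omitted proof, with the corollary then following from the identification $Y_t = G(t,T)$ built into the FBSDE construction. No meaningful deviation from the paper's route.
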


\begin{remark}
Note that the Riccati equations~(\ref{eq:fut_ricc_1})-(\ref{eq:fut_ricc_2}) are similar to those associated with the bond price.  We defer discussion of the existence, uniqueness, and explicit solution of more general equations which include (\ref{eq:fut_ricc_1})-(\ref{eq:fut_ricc_2}) as special cases to the Appendix.
\end{remark}

We next consider the forward price of the risky asset.

\subsection{Forward Prices}

Recall that the $T$-forward price of the risky asset at time $t\in[0,T]$ is
\begin{equation}
F(t,T) = \frac{E_{Q}[ \exp{\left( -\int_{t}^{T}r(X_{u})du \right)}S(T,X_{T}) | \filtration{t}{}]}{P(t,T)} \label{eq:53a}
\end{equation}
where $P(t,T)$ is the price at time $t$ of the zero coupon bond maturing at time $T$.  In the case that the risk free interest rate is deterministic the futures and forward prices are identical.  Therefore, we shall assume that the interest rate is given as in Section~\ref{sec:2} and the risky asset price $S$ is given by equation~(\ref{eq:Sfunc}).  Further, we assume that the risky asset pays a dividend yield (or convenience yield) so that the discounted asset price is not a ${Q}$-martingale which would reduce the numerator of equation~(\ref{eq:53a}) to the risky asset price.

Similar to Section~\ref{sec:2} in the case of the bond price and \citet{Cody2009} in the case of the forward price in an APM we characterize the factor process and the  numerator of equation~(\ref{eq:53a}), which is the risk neutral present value of a forward commitment to deliver one unit of the risky asset at time $T$, in terms of a FBSDE.  Define 
\begin{equation*}
V_{s} = E_{Q}[ \exp{\left( -\int_{0}^{T}r(X_{u})du \right)}S(T,X_{T}) | \filtration{s}{}]
\end{equation*}
and 
\begin{equation*}
H_{s} = \exp{\left( -\int_{0}^{s}r(X_{u})du \right)}.
\end{equation*}
Let $Y_{s} = V_{s} / H_{s}$ and note that $Y_{s}=F(s,T)P(s,T)$.

Since $V_{s}$ is a martingale we have, by the Martingale Representation Theorem, that there exists an adapted process $J_{t}=[J^{1}_{t},\ldots,J^{n}_{t}]$, represented as a $(1\times n)$ vector valued process, such that 
\begin{equation}
V_{s} = V_{0} + \int_{0}^{s}  J_{u} \ dW_{u}. \label{eq:fwd4}
\end{equation}
Apply It\^{o}'s formula to find that $Y_{t}$ satisfies the BSDE
\begin{equation}
Y_{t} = S(T,X_{T}) - \int_{t}^{T}r(X_{u})Y_{u}\ du -\int_{t}^{T} Z_{u}\ dW_{u} \label{eq:fwd6}
\end{equation}
where $Z_{u} = J_{u}/H_{u}$.

Define the risk-neutral measure for the num\'eraire $N(\cdot) = F(\cdot,T)P(\cdot,T)$, denoted by $Q^{F}$,  by the Radon-Nikodym derivative
\begin{equation*}
\Gamma_{T} = \left. \frac{dQ^{F}}{dQ} \right|_{\filtration{T}{}}
= \frac{F(T,T)P(T,T)}{F(0,T)P(0,T)} \exp{\left( -\int_{0}^{T}r(X_{u}) du \right)} = \frac{ S(T,X_{T})}{F(0,T)P(0,T)}\exp{\left( -\int_{0}^{T}r(X_{u}) du \right)}.
\label{eq:fwdRN}
\end{equation*}
Then, with $\Gamma_{t} =  E_{Q}[\Gamma_{T}|\filtration{t}{}]$, we have that $\Gamma_{t}= V_{t}/V_{0}$ and, by equation~(\ref{eq:fwd4}),
\begin{equation*}
\Gamma_{t} = 1 + \int_{0}^{t} \Gamma_{u} \frac{Z_{u}}{Y_{u}} \ dW_{u}.
\end{equation*}
Hence, by Girsanov's theorem,
\begin{equation}
W^{F}_{t} = W_{t} - \int_{0}^{t}\frac{Z_{u}\transpose}{Y_{u}}\ du
\label{eq:fwd10}
\end{equation}
is an $(\filtration{t}{},Q^{F})$-Brownian motion.
Using equation~(\ref{eq:fwd10}) to write the dynamics of $(X,Y)$, given by equations~(\ref{FBSDEsRN})~and~(\ref{eq:fwd6}), under the measure $Q^{F}$ we obtain the following coupled quadratic FBSDE
\begin{align}
X_{t} &= X_{0} + \int_{0}^{t}\left( AX_{u} + B + \sigma \frac{Z_{u}\transpose}{Y_{u}} \right)\ du + \int_{0}^{t} \sigma \ dW^{F}_{u} \label{eq:fwd11} \\
Y_{t} &= S(T,X_{T}) - \int_{t}^{T} \left(r(X_{u})Y_{u} + \frac{Z_{u} Z_{u}\transpose }{Y_{u}} \right) \ du - \int_{t}^{T} Z_{u} \ dW_{u}^{F}. \label{eq:fwd12}
\end{align}
Note that the FBSDE~(\ref{eq:fwd11})-(\ref{eq:fwd12}) is similar to the FBSDE presented in \citet{Cody2009} for APMs except that the volatility dynamics of $X$ are  simpler while the functions $r$ and $S$ are, as functions of $X_{t}$, quadratic and exponential-quadratic functions rather than affine and exponential-affine functions respectively.  Similar to the result of Section~\ref{sec:2} for the case of the bond and \citet{Cody2009} for APMs the following result gives an explicit solution to the quadratic FBSDE~(\ref{eq:fwd11})-(\ref{eq:fwd12}), which is independent of the construction presented, and hence we omit the proof.
\revised{
\begin{lemma}\label{lemmaC}
If $\Gamma\in\real^{n\times n}$ is positive semidefinite and $a(T)\in\real^{n\times n}$ is negative semidefinite then the Riccati-type differential equations
\begin{align}
0 &= \frac{d}{dt} R^{F}_{2}(t) + \left( [R^{F}_{2}(t)]\transpose + R^{F}_{2}(t) \right) A + \frac{1}{2}\left( [R^{F}_{2}(t)]\transpose + R^{F}_{2}(t) \right) 
\sigma\sigma\transpose  \left( [R^{F}_{2}(t)]\transpose + R^{F}_{2}(t) \right) - \Gamma, & R^{F}_{2}(T) = a(T)
\label{eq:63a} \\
0 &= \frac{d}{dt} R^{F}_{1}(t) + R^{F}_{1}(t) A + B\transpose \left( [R^{F}_{2}(t)]\transpose + R^{F}_{2}(t) \right) + R^{F}_{1}(t) \sigma\sigma\transpose \left( [R^{F}_{2}(t)]\transpose + R^{F}_{2}(t) \right) -R, & R^{F}_{1}(T)=b(T)
\label{eq:64a} 
\end{align}
admit unique solutions $R^{F}_{2}(t)\in\real^{n\times n}$ and $R^{F}_{1}(t)\in\real^{1\times n}$ for all $t\in[0,T]$.
\end{lemma}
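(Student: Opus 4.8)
The plan is to follow exactly the template established in the proofs of Lemma~\ref{lemmaA} and Lemma~\ref{lemmaB}, reducing each of the two Riccati equations to the general results collected in the Appendix rather than analyzing them directly. First I would observe that the nonsymmetric matrix Riccati equation~(\ref{eq:63a}) is the special case of the general equations~(\ref{eq:App_RE1})-(\ref{eq:App_RE2}) of Theorem~\ref{th:Riccati} obtained by setting $\Upsilon = \Gamma$ and $\Theta = a(T)$: the former produces the $-\Gamma$ term in the ODE and the latter supplies the terminal condition $R^{F}_{2}(T)=a(T)$. The only structural difference from the two preceding lemmas is that here \emph{both} of these coefficients are nonzero at once (Lemma~\ref{lemmaA} had $\Theta=0$ and Lemma~\ref{lemmaB} had $\Upsilon=0$), so the two sign hypotheses of Theorem~\ref{th:Riccati} must be verified simultaneously.

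That verification is the only substantive point, and it is short. Since $\Gamma$ is positive semidefinite by assumption, the symmetric part $(\Upsilon+\Upsilon\transpose)=2\Gamma$ is positive semidefinite; and since $a(T)$ is negative semidefinite by Assumption~\ref{ModelA2}, the symmetric part $(\Theta+\Theta\transpose)=2a(T)$ is negative semidefinite. These are precisely the conditions required by Theorem~\ref{th:Riccati}, which therefore delivers a unique solution $R^{F}_{2}(\cdot)\in\real^{n\times n}$ on the entire interval $[0,T]$.

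With $R^{F}_{2}$ in hand, equation~(\ref{eq:64a}) for $R^{F}_{1}$ is linear in $R^{F}_{1}$ with time-dependent coefficients assembled from the now-known, continuous matrix $R^{F}_{2}$. I would invoke Corollary~\ref{CorollaryA}, reading off the correspondence $\Psi=R$ (from the $-R$ term) and $\theta=b(T)$ (from the terminal condition $R^{F}_{1}(T)=b(T)$), to conclude that (\ref{eq:64a}) admits a unique solution $R^{F}_{1}(\cdot)\in\real^{1\times n}$ on $[0,T]$. I do not anticipate a genuine obstacle, since all of the analytic work—the global-in-time existence and uniqueness for the nonsymmetric matrix Riccati equation under the stated semidefiniteness hypotheses—is already carried out in the Appendix; the only thing to get right is the bookkeeping of the reduction, namely identifying $(\Upsilon,\Theta,\Psi,\theta)=(\Gamma,a(T),R,b(T))$ and confirming that both the positive-semidefinite and the negative-semidefinite conditions hold together.
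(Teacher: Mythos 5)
Your proposal is correct and follows exactly the paper's own argument: the paper likewise proves Lemma~\ref{lemmaC} by setting $\Upsilon=\Gamma$ and $\Theta=a(T)$ in Theorem~\ref{th:Riccati} and $\Psi=R$, $\theta=b(T)$ in Corollary~\ref{CorollaryA}. Your explicit verification of the two semidefiniteness hypotheses is a detail the paper leaves implicit, but the route is identical.
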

\begin{proof}
The result follows, similar to the proof of Lemma~\ref{lemmaA}, if we set $\Upsilon = \Gamma$ and $\Theta =  a(T)$ in equations~~(\ref{eq:App_RE1})-(\ref{eq:App_RE2})  of Theorem~\ref{th:Riccati} and $\Psi=R$ and $\theta = b(T)$ in Corollary~\ref{CorollaryA}.
\end{proof}
}
\begin{theorem}\label{th:fwd}
\revised{If $\Gamma\in\real^{n\times n}$ is positive semidefinite, $a(T)\in\real^{n\times n}$ is negative semidefinite, $r(x)$ is given by equation~(\ref{FactorModel}), and $S(t,x)$ is given by equation~(\ref{eq:Sfunc}) then} the FBSDE~(\ref{eq:fwd11})-(\ref{eq:fwd12}) has a unique adapted solution $(X,Y,Z)$ given by
\begin{align}
X_{t} &= X_{0} + \int_{0}^{t} \left\{ \left[ A + \sigma\sigma\transpose \left( [ R^{F}_{2}(u)]\transpose + R^{F}_{2}(u) \right) \right] X_{u} + \left(B+\sigma\sigma\transpose [R^{F}_{1}(u)]\transpose  \right) \right\} du + \int_{0}^{t} \sigma \ dW_{u}^{F} \\
Y_{t} &= \exp{\left( X_{t} R^{F}_{2}(t) X_{t} + R^{F}_{1}(t) X_{t} + R^{F}_{0}(t) \right)}
 \\
Z_{t} &=  \left[ X_{t}\transpose \left( [R^{F}_{2}(t)]\transpose + R^{F}_{2}(t)\right) X_{t} + [R^{F}_{1}(t)] \right] \sigma \ Y_{t}
\end{align}
where $R_{2}^{F}(t)$ is the solution of equation~(\ref{eq:63a}), $R_{1}^{F}(t)$ is the solution of equation~(\ref{eq:64a}), and 
\begin{equation}
{R}^{F}_{0}(t) = c(T) - \int_{t}^{T}\left(k -  R^{F}_{1}(u)B - \frac{1}{2}R^{F}_{1}(u)\sigma\sigma\transpose [R^{F}_{1}(u)]\transpose 
+ \trace{\sigma\transpose [R^{F}_{2}(u)]\transpose \sigma} \right)\ du \ .
\label{eq:65a} 
\end{equation}
\end{theorem}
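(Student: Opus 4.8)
The plan is to follow the two-part strategy used in the proof of Theorem~\ref{TheoremFBA}: first verify directly that the triple $(X,Y,Z)$ given in the statement solves the FBSDE~(\ref{eq:fwd11})-(\ref{eq:fwd12}), and then establish uniqueness by reducing the difference of any two solutions to a scalar quadratic BSDE governed by \citet[Theorem~2.3]{Kobylanski}. The only structural novelty relative to Theorem~\ref{TheoremFBA} is that the terminal condition is now the exponential-quadratic $S(T,X_T)$ rather than $1$; but this is exactly the feature already handled in the futures case of Theorem~\ref{th:fut}. The forward case therefore combines the nonzero terminal data of the futures problem with the quadratic short-rate term $r(X_u)Y_u$ of the bond problem, and the proof is essentially the superposition of those two arguments.

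For the verification step I would first divide the stated $Z_t$ by $Y_t$ to obtain $Z_t/Y_t=[X_t\transpose([R^{F}_2(t)]\transpose+R^{F}_2(t))+R^{F}_1(t)]\sigma$, and check that the stated formula for $X_t$ then coincides with the forward SDE~(\ref{eq:fwd11}), so that the $X$-dynamics are satisfied. Next I would set $Y_t=f(t,X_t)$ with $f(t,x)=\exp(x\transpose R^{F}_2(t)x+R^{F}_1(t)x+R^{F}_0(t))$ and apply It\^o's formula along the $X$-dynamics. Grouping the resulting drift into terms that are quadratic, linear, and constant in $X_t$, the quadratic-in-$X$ terms vanish precisely when $R^{F}_2$ solves~(\ref{eq:63a}) (the $-\Gamma$ coming from the quadratic part of $r$), the linear-in-$X$ terms vanish precisely when $R^{F}_1$ solves~(\ref{eq:64a}) (the $-R$ likewise coming from $r$), and the remaining constant terms---namely $\dot R^{F}_0(t)$, the scalar $k$ from $r$, the quadratic driver contribution $\tfrac12 R^{F}_1\sigma\sigma\transpose[R^{F}_1]\transpose$, and the second-order It\^o trace $\trace{\sigma\transpose R^{F}_2(t)\sigma}$---cancel precisely when $R^{F}_0$ is given by~(\ref{eq:65a}). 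Together these reproduce the driver $r(X_u)Y_u+Z_uZ_u\transpose/Y_u$ of~(\ref{eq:fwd12}), while the terminal conditions $R^{F}_2(T)=a(T)$, $R^{F}_1(T)=b(T)$, $R^{F}_0(T)=c(T)$ yield $Y_T=S(T,X_T)$ through~(\ref{eq:Sfunc}).

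For uniqueness, given any adapted solution $(X,Y,Z)$, I would define $\bar Y_t$ and $\bar Z_t$ by the same explicit formulas evaluated along the given $X$, apply It\^o's formula to $\log\bar Y_t$, and subtract the result from the logarithmic form of~(\ref{eq:fwd12}). As in Theorem~\ref{TheoremFBA}, the short-rate and trace contributions cancel identically, leaving for $\hat Y_t=\log Y_t-\log\bar Y_t$ and $\hat Z_t=Z_t/Y_t-\bar Z_t/\bar Y_t$ the reduced quadratic BSDE $\hat Y_t=-\tfrac12\int_t^T\hat Z_u\hat Z_u\transpose\,du-\int_t^T\hat Z_u\,dW_u^F$ with zero terminal value, whose unique adapted solution is $(0,0_{1\times n})$ by \citet[Theorem~2.3]{Kobylanski}. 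Hence $Y=\bar Y$ and $Z=\bar Z$, forcing any solution to coincide with the stated one. I expect the only genuine obstacle to be bookkeeping: confirming that the $-\Gamma$, $-R$, and $k$ contributions of the quadratic short rate combine with the nonzero terminal data $a(T),b(T),c(T)$ so that the three Riccati equations~(\ref{eq:63a})-(\ref{eq:65a}) emerge exactly, and checking that these same short-rate terms drop out in the difference so that the uniqueness BSDE reduces to the driver-free quadratic equation to which \citet{Kobylanski} applies.
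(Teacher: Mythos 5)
Your proposal is correct and follows exactly the route the paper intends: the paper omits the proof of Theorem~\ref{th:fwd}, stating only that it is analogous to the proof of Theorem~\ref{TheoremFBA} (verification via It\^{o}'s formula and the Riccati equations, then uniqueness by passing to $\log Y$ and invoking \citet[Theorem~2.3]{Kobylanski} on the reduced quadratic BSDE), which is precisely what you carry out, with the nonzero terminal data $a(T),b(T),c(T)$ handled as in Theorem~\ref{th:fut}. No substantive gap; your two-part argument is essentially the paper's own.
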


\begin{corollary}
\revised{Under the conditions of Assumptions~\ref{ModelA} and \ref{ModelA2} t}he forward price is an exponential quadratic function of the factors process
$$F(t,T) = \frac{ \exp{\left(X_{t}\transpose R^{F}_{2}(t) X_{t} + R^{F}_{1}(t)X_{t} + R^{F}_{0}(t) \right)}}{P(t,T)}
= \frac{ \exp{\left(X_{t}\transpose R^{F}_{2}(t) X_{t} + R^{F}_{1}(t)X_{t} + R^{F}_{0}(t) \right)}}{\exp{\left(X_{t}\transpose R_{2}(t) X_{t} + R_{1}(t)X_{t} + R_{0}(t) \right)}}
$$
where 
$R^{F}_{2}(t)$ and  $R^{F}_{1}(t)$ are the solutions to equations (\ref{eq:63a})-(\ref{eq:64a});  $R^{F}_{0}(t)$ is given by equation~(\ref{eq:65a}); and 
$R_{2}(t)$, $R_{1}(t)$, and $R_{0}(t)$ are the solutions to equations
(\ref{Riccati2}), (\ref{Riccati2b}) and (\ref{TheoremR0}).
\end{corollary}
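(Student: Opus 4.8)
The plan is to follow, almost verbatim, the two-part structure of the proof of Theorem~\ref{TheoremFBA}, since the driver of the backward equation~(\ref{eq:fwd12}) coincides exactly with that of~(\ref{FBSDEsFWModelAb}): both equal $r(X_u)Y_u + Z_u Z_u\transpose/Y_u$. The single structural change is that the terminal condition is now the exponential-quadratic $S(T,X_T)=\exp\!\left(X_T\transpose a(T)X_T + b(T)X_T + c(T)\right)$ from~(\ref{eq:Sfunc}) rather than the constant $1$. Accordingly, the terminal data of the associated Riccati/quadrature system are shifted from the zeros of the bond case to $(a(T),b(T),c(T))$, and Lemma~\ref{lemmaC} (using that $\Gamma$ is positive semidefinite and $a(T)$ is negative semidefinite) supplies unique solutions $R^{F}_{2},R^{F}_{1}$ on $[0,T]$, after which $R^{F}_{0}$ is defined by the quadrature~(\ref{eq:65a}).

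For the existence half I would first verify the forward equation for $X$: dividing the stated $Z_t$ by $Y_t$ gives $Z_t/Y_t = [X_t\transpose([R^{F}_{2}(t)]\transpose + R^{F}_{2}(t)) + R^{F}_{1}(t)]\sigma$, and substituting this into the drift of~(\ref{eq:fwd11}) reproduces the stated dynamics of $X$. Next I would apply It\^o's formula to $f(t,x)=\exp\!\left(x\transpose R^{F}_{2}(t)x + R^{F}_{1}(t)x + R^{F}_{0}(t)\right)$ along this $X$, paralleling~(\ref{eq7})--(\ref{eq8}). Grouping the drift into quadratic-in-$X$, linear-in-$X$, and constant parts and imposing the Riccati equations~(\ref{eq:63a})--(\ref{eq:64a}) together with the ODE~(\ref{eq:65a}) collapses the drift to $r(X_t)Y_t + Z_t Z_t\transpose/Y_t$, which is the driver of~(\ref{eq:fwd12}). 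The terminal condition follows by evaluation at $t=T$: since $R^{F}_{2}(T)=a(T)$, $R^{F}_{1}(T)=b(T)$, $R^{F}_{0}(T)=c(T)$, we obtain $Y_T=\exp\!\left(X_T\transpose a(T)X_T+b(T)X_T+c(T)\right)=S(T,X_T)$.

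For uniqueness I would take an arbitrary adapted solution $(X,Y,Z)$ and, mirroring~(\ref{eq9})--(\ref{TheoremHatY}), define $\bar Y_t,\bar Z_t$ by the candidate formulas evaluated along the given $X_t$. Applying It\^o to $\log\bar Y_t = X_t\transpose R^{F}_{2}(t)X_t + R^{F}_{1}(t)X_t + R^{F}_{0}(t)$ and subtracting from the logarithmic form of~(\ref{eq:fwd12}), the processes $\hat Y_t:=\log Y_t-\log\bar Y_t$ and $\hat Z_t:=Z_t/Y_t-\bar Z_t/\bar Y_t$ satisfy the purely quadratic BSDE $\hat Y_t = -\frac{1}{2}\int_t^T \hat Z_u\hat Z_u\transpose\,du - \int_t^T \hat Z_u\,dW_u^F$, with terminal value $\hat Y_T=0$ because $\log Y_T=\log S(T,X_T)=X_T\transpose a(T)X_T+b(T)X_T+c(T)=\log\bar Y_T$. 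By \citet[Theorem~2.3]{Kobylanski} this equation admits the unique solution $(\hat Y,\hat Z)=(0,0_{1\times n})$, so $Y=\bar Y$ and $Z=\bar Z$, and every adapted solution coincides with the stated one.

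The only genuinely new feature relative to Theorem~\ref{TheoremFBA} is the nonconstant, exponential-quadratic terminal value, and this is where I expect the care to lie. In the existence half one must confirm that shifting the Riccati and quadrature boundary data from $0$ to $(a(T),b(T),c(T))$ leaves the verification intact; this is immediate since those data enter only through the terminal conditions and not the drivers. In the uniqueness half the delicate point is the applicability of \citet{Kobylanski}, which needs $\hat Y_T=0$ (which holds exactly, as above) and suitable integrability of the transformed data. Here the hypothesis that $a(T)$ is negative semidefinite is essential: it bounds $\log S(T,X_T)$ above by the maximum of a concave quadratic, keeping the terminal condition and the quadratic driver within the scope of the comparison theorem. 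I therefore expect the main obstacle to be a bookkeeping one—tracking the shifted terminal data through the It\^o expansion—rather than a conceptual departure from the bond-price argument.
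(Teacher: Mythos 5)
Your proposal is correct and follows essentially the route the paper intends: the paper omits the proof of Theorem~\ref{th:fwd} precisely because it is the verbatim adaptation of the proof of Theorem~\ref{TheoremFBA} with the terminal data shifted from $(0,0,0)$ to $(a(T),b(T),c(T))$, which is exactly what you carry out, invoking Lemma~\ref{lemmaC} for the Riccati solvability and the Kobylanski uniqueness argument for the purely quadratic BSDE in $(\hat Y,\hat Z)$ just as in the bond case. The only step you leave implicit is the passage from the FBSDE solution to the corollary itself --- identifying $Y_t=F(t,T)P(t,T)$ via the construction preceding (\ref{eq:fwd11})--(\ref{eq:fwd12}) together with uniqueness, and then dividing by the bond price from Corollary~\ref{cor3} --- but that is immediate.
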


\begin{remark}
Comparing Theorems~\ref{TheoremFBA}, \ref{th:fut}, and \ref{th:fwd} we see that the Riccati type equations (\ref{eq:63a})-(\ref{eq:64a}) for $R^{F}_{2}(t)$ and  $R^{F}_{1}(t)$ and the integral for ${R}^{F}_{0}(t)$  include as special cases the corresponding terms for the bond and futures prices if we make certain parametric restrictions.  This general form is similar to the Riccati type differential equations of linear quadratic control (LQC). Further, these results suggest a correspondence between our results and the results of \citet{gombani2012arbitrage} which are based on LQC.  We discuss the \revised{explicit} solvability of a general Riccati type equation which includes (\ref{eq:63a})-(\ref{eq:64a}) as a special case, and the correspondence to those of LQC in the Appendix.
\end{remark}

In the next section we briefly consider an example of the application to QTSMs of the method of stochastic flows due to \citet{Elliott} which was originally developed in the context of ATSMs. A general formulation of the method of stochastic flows for QTSMs similar to that presented for ATSMs in Section~4 of \citet{Cody2009} can be developed for QTSMs and extended to QPMs similar to the extension for presented in Sections~5.2 and 5.4 of \cite{Cody2009} for APMs.  However, the main motivation for our consideration of the flows method in the context of quadratic models driven by Gaussian factors, in contrast to the case for affine models driven by Gaussian factors which was considered in  \citet{Elliott} and \citet{hyndman:Gauss}, is to show that a measure change is necessary for the method to be effective.  This objective can easily be accomplished by consideration of an example in the one-dimensional case.

\section{Stochastic Flows}\label{sec:flows}

The FBSDE approach to characterizing the bond price in ATSM introduced in \citet{Cody2009}, and extended to QTSMs in this paper, was motivated by the stochastic flow approach introduced by \citet{Elliott}.  The stochastic flow approach  expresses the price at time $t$ of the $T$-maturity zero coupon bond as a function of the value $x$ at time $t$ of the factor process, and this dependence is denoted $P(t,T,x)$.  By taking the derivative of $P(t,T,x)$ with respect to the initial condition, and then using properties of stochastic flows and their Jacobians, it is possible to express $P(t,T,x)$ as an ordinary differential equation (ODE) which illuminates the nature of the functional dependence of the bond price on the factor process.  In \cite{Elliott} when the factors process is Gaussian and the interest rate is an affine function of the factors process the fact that the derivative of the factor process with respect to $x$ is deterministic leads to a linear ODE for $P(t,T,x)$.  In this case it is then immediate that the bond price is an exponential affine function of the factor process.  These results were extended to characterize futures and forward prices on an asset which is an exponential affine function of a Gaussian factor process in \citet{hyndman:Gauss}.  

However, in \citet{Elliott} when the dynamics of the factors are given by an affine process and the interest rate is an affine function of the factors the fact that the derivative with respect to $x$ of the factor process is not deterministic requires a change of measure in order to obtain a linear ordinary differential equation satisfied by the bond price.  The coefficient of this linear ODE is the conditional expectation of a function of the derivative of the factor process.  In the one-dimensional case, corresponding to the CIR model, semi-group properties of the stochastic flow can be used to show that this coefficient is deterministic.  Unfortunately in the case that the factor process is multi-dimensional there was a gap in the proof of a key approximation lemma of \citet{Elliott} which was used to claim that the coefficient of the linear ODE is deterministic (see \cite{Cody2009,hyndman:local}).  The FBSDE approach of \cite{Cody2009} proved, as a corollary to the main results, that the coefficient of the linear ODE for the bond price in the stochastic flows approach is in fact deterministic\revised{.}%

In the case of a Gaussian factor process where the interest rate is a quadratic function of the factor process it is illustrative to consider the stochastic flow approach in the one-dimensional case.  We find that although the factor process is Gaussian, and the derivative of the flow is deterministic, that the change of measure is still necessary in order to obtain an ODE for the bond price.

Suppose  the dynamics of the factor process, $X_{t}$, \revised{are given on the risk-neutral probability space $(\Omega,\mathcal{F},\{\mathcal{F}_t,t \geq 0\},Q)$ by} the one-dimensional model
\begin{equation}
  dX_t=\beta(\alpha-X_t)dt+\sigma dW_t \label{onefactorprocess}
\end{equation}  
and the riskless interest rate $r_t$ is given by the quadratic function of the factor process
\begin{equation}
  r(X_t)=cX_t^2+bX_t+a. \label{oneshortrate} 
\end{equation} 
Write $X_s^{t,x}$ for the solution of
(\ref{onefactorprocess}) started from $x\in \mathbf{R}$ at time
$t\geq 0$. That is, $X_s^{t,x}$ satisfies
\begin{equation*}\label{oneflows}
X_s^{t,x}=x+\int\limits_t\limits^s\beta(\alpha-X_u^{t,x})du+\sigma\int\limits_t\limits^sdW_u,
\quad s\in[t,T].
\end{equation*} 
We refer to $X^{t,x}_{s}$ as the \textit{stochastic flow} associated with the factors process.  \revised{Stochastic flows have been studied extensively by many authors including \citet{MR517235}, \citet{MR620987,MR629977}, and \citet{MR876080,MR1070361}.}  The map $x\rightarrow X^{t,x}_{s}$ is \revised{$Q$-}almost surely differentiable and the derivative satisfies
\begin{equation}
\frac{\partial X_s^{t,x}}{\partial x} = 1 - \beta \int_{t}^{s} \frac{\partial
X_u^{t,x}}{\partial x} du, \quad s \geq t \label{oneDefineD}
\end{equation}
\citet[Theorem 39, p 250]{Protter}. 
Equation~(\ref{oneDefineD}) can be solved independent of $x$.  If $D_{ts}$ satisfies 
\begin{equation}\label{oneDynamicsD}
D_{ts} =1-\beta\int\limits_t\limits^sD_{tu}\ du
\end{equation}
then the unique solution to (\ref{oneDynamicsD}) is the exponential
\begin{equation}
D_{ts}=e^{-\beta(s-t)}. \label{eq:8b}
\end{equation}
That is, $D_{ts} =\frac{\partial X_s^{t,x}}{\partial x}$ which does not depend on $x$ for all $0 \leq t \leq s \leq T$.

By the Markov property of $X_t$,
$%
P(t,T)=P(t,T,X_t),
$%
where\begin{equation}\label{defineP}
P(t,T,x)=E\bigg[\text{exp}\bigg\{-\int\limits_t\limits^T
r(X_u^{t,x})du\bigg\}\bigg].
\end{equation}
Taking the derivative of (\ref{defineP}) with respect to $x$, following \citet{Elliott} and \citet{Cody2007CIR,hyndman:Gauss,Cody2009}, we find
\begin{align}
\frac{\partial P(t,T,x)}{\partial
x}&=E\bigg[\bigg(-\int\limits_t\limits^Tr'(X_u^{t,x})\frac{\partial
X_u^{t,x}}{\partial
x}du\bigg)\text{exp}\bigg(-\int\limits_t\limits^T
r(X_u^{t,x})du\bigg)\bigg]\nonumber\\
&=E\bigg[ L(t,T,x)\text{exp}\bigg(-\int\limits_t\limits^T
r(X_u^{t,x})du\bigg)\bigg],\label{DefOfP}
\end{align}
where
\begin{equation*}\label{defineL}
L(t,T,x)=-\int\limits_t\limits^T (2cX_u^{t,x} +b ) D_{tu} \ du.
\end{equation*} 
We may exchange the order of expectation and
differentiation since $b(x,t) := \beta(\alpha-x)$ and
$\sigma(x,t):= \sigma$ satisfy  linear growth conditions
in $x$ and a global Lipschitz condition, the partial derivatives of
$b(x,t)$ and $\sigma(x,t)$ are continuous and satisfy a polynomial
growth condition, and the function
$\exp{\left(-\int_{t}^{T} r(x)du\right)}$ has two continuous
derivatives satisfying a polynomial growth condition (see
\citet[pp. 117-123]{Friedman}).

In contrast to the work of \citet{Elliott} and \citet{hyndman:Gauss}, in the case of ATSMs driven by Gaussian factors, we may not factor $L(t,T,x)$ from the expectation in equation~(\ref{DefOfP}) as this function depends \revised{on} $X_{u}^{t,x}$. Therefore, even though the model of the instantaneous interest rate given by Assumption~\ref{ModelA}, is driven by Gaussian factors we introduce a change of probability measure as in \citet{Elliott} and \citet{Cody2007CIR,Cody2009}.  
That is, in order formulate equation~(\ref{DefOfP}) as an ODE, we express the conditional expectation under the forward measure.  This choice is motivated by the relationship between a squared Gaussian process and the CIR process previously mentioned.  

Recall Definition~\ref{Numeraire} for the forward measure and note that for any $\mathcal F_T$-measurable random variable $\varphi$
with $E_T|\varphi|<\infty$ a general version of Bayes' Theorem (see, \citet[Lemma 3.5.3]{Karatzas}) gives that
\begin{equation}\label{ChangeOfMeasure}
E_T[\varphi|\mathcal{F}_t]=\Lambda_t^{-1}E[\varphi\Lambda_T|\mathcal{F}_t]
\end{equation}
where $\Lambda_{T}$ is given by equation~(\ref{eq:11}) and from equation~(\ref{Lambdat}) we may write
\begin{align}\label{oneLambda}\Lambda_t =\frac{\exp{\left(-\int\limits_0\limits^t r(X_u)du\right)}P(t,T)}{P(0,T)}.\end{align}
Substitute equations (\ref{eq:11}) and (\ref{oneLambda}) into equation (\ref{ChangeOfMeasure}), noting that $\exp{\left(\int_{0}^{t}r(X_{u})du\right)}$ is $\filtration{t}{}$-measurable, to find
\begin{align}\label{1eq1}
E_T[\varphi|\mathcal{F}_t]P(t,T)&=
E\bigg[\varphi\cdot\text{exp}\bigg\{-\int\limits_t\limits^T
r(X_u^{t,X_t})du\bigg\}\bigg|\mathcal{F}_t\bigg]. 
\end{align}
Therefore, with $\varphi = L(t,T,X_{t})$ in equation~(\ref{1eq1}) we find
\begin{align}\label{oneDiffP}
\frac{\partial P(t,T,x)}{\partial
x}\bigg|_{x=X_t}  &=P(t,T,X_{t})E_T\bigg[L(t,T,X_t)\bigg|\mathcal{F}_t\bigg]
\end{align}
which resembles a linear ODE for the bond price.

In order to solve equation~(\ref{oneDiffP}) we must first consider the conditional expectation's dependence on $X_{t}$.  To derive the dynamics of $X^{t,x}_{u}$ under the forward measure we must characterize the Brownian motion under $Q^{T}$ using Girsanov's Theorem.   As in \citet{Elliott} we may show that
\begin{equation*}
\Lambda_t=\Lambda_0-\int\limits_0\limits^t\Theta_u\Lambda_u dW_u,
\end{equation*}
where
\begin{equation}\label{OneTheta} \Theta_u=-\sigma
E_T\bigg[L(u,T,X_u)\bigg|\mathcal{F}_u\bigg].
\end{equation}
Therefore, by Girsanov's Theorem, the process $W^{T}_{t}$ defined by
\begin{equation} \label{eq:18}
W_t^T=W_t+\int\limits_0\limits^t\Theta_u du
\end{equation} 
is a standard Brownian Motion with respect to the forward
measure $Q^T$. 

Using equations~(\ref{OneTheta})-(\ref{eq:18}) the dynamics of $X_s^{t,x}$ under the forward measure are
\begin{equation}\label{oneXUnderFW}
X_s^{t,x}=x+\int\limits_t\limits^s\{\beta(\alpha-X_v^{t,x})-\sigma\Theta_v\}dv+\sigma\int\limits_t\limits^s
dW_v^T.
\end{equation}
Apply It\^{o}'s product rule to the dynamics of $D_{ts}$
given by (\ref{oneDynamicsD}) and $X_s^{t,x}$ given by
(\ref{oneXUnderFW}).  Then, since $D_{ts}(x)$ is of finite variation, we
have
\begin{align}
X_s^{t,x}D_{ts} &=x+\int\limits_t\limits^s D_{tv} dX_v^{t,x}+\int\limits_t\limits^s X_v^{t,x}d D_{tv} \nonumber\\
&=x+\alpha\beta \int\limits_t\limits^s D_{tv}\ dv-2\beta
\int\limits_t\limits^s X_v^{t,x}D_{tv} \ dv
+\sigma \int\limits_t\limits^s
D_{tv} \ dW_v^T   %
-\sigma^2\int\limits_t\limits^s
D_{tv} \ E_T\bigg[\int\limits_v\limits^T (2cX_{v_1}^{v,x}+b)
D_{vv_1} \bigg|_{x=X_v}dv_1\bigg|\mathcal{F}_v\bigg]dv.\label{1eq6}
\end{align}
Evaluate equation~(\ref{1eq6}) at $x=X_{t}$ and take the $\filtration{t}{}$-conditional expectation under the forward
measure $Q^T$ to find
\begin{align}
&E_T[X_s^{t,X_t}D_{ts} |\mathcal{F}_t]=X_t+ \int\limits_t\limits^s
\left\{ \alpha\beta D_{tv} - 2\beta E_T[X_v^{t,X_t}D_{tv} | \mathcal{F}_t] 
-\sigma^2\int\limits_v\limits^T
E_T[D_{tv} \ E_T[(2cX_{v_1}^{v,X_v}+b)D_{vv_1} dv_1 \ | \ \mathcal{F}_v] \ | \ \mathcal{F}_t]
\right\} \ dv.\label{1eq7}
\end{align}
By the tower property of conditional expectation, since $t\leq v\leq
s\leq T$, the conditional expectation in the double integral of equation (\ref{1eq7}) becomes
\begin{align}
E_T[D_{tv} \ E_T[(2cX_{v_1}^{v,X_v}+b)D_{vv_1} \ | \ \mathcal{F}_v] \ | \ \mathcal{F}_t]%
&= E_T[D_{tv} (2cX_{v_1}^{v,X_v}+b)D_{vv_1} |\mathcal{F}_t] %
= 2c E_T[D_{tv}\ D_{vv_1} \ X_{v_1}^{v,X_v} |\mathcal{F}_t]+ b
D_{tv} \ D_{vv_1}, \label{1eq8}
\end{align}
where we have used the fact that $D_{vv_{1}}$ is deterministic for $t \leq v \leq v_{1} \leq T$.
By the flow property we have, for $t\leq v\leq v_1 \leq T$,
\begin{equation}
X_{v_1}^{v,X_v}=X_{v_1}^{v,X_v^{t,X_t}}=X_{v_1}^{t,X_t}
\end{equation}
and by equation~(\ref{eq:8b}), or the chain rule, we have 
\begin{equation}
D_{tv} \ D_{vv_1}=e^{-\beta(v-t)}e^{-\beta(v_{1}-v)} = e^{-\beta(v_{1}-t)} = D_{tv_1}.\label{flowsD}
\end{equation}
Then, substitute equations (\ref{1eq8}) and (\ref{flowsD}) into equation (\ref{1eq7})
to find
\begin{align}\label{1eq2}
E_T[X_s^{t,X_t}D_{ts} |\mathcal{F}_t]&=X_t+\alpha\beta\int\limits_t\limits^s
D_{tv} \ dv - 2 \beta\int\limits_t\limits^s
E_T[X_v^{t,X_t}D_{tv} \ | \ \mathcal{F}_t]dv  \nonumber\\ &
- b \sigma^2 \int\limits_t\limits^s\int\limits_v\limits^T
D_{tv_1} \ dv_1 \ dv 
 - 2 c \sigma^2  \int\limits_t\limits^s\int\limits_v\limits^T
E_T[X_{v_1}^{t,X_t}D_{tv_1}  \ | \ \mathcal{F}_t]\ dv_1 \ dv.
\end{align}
For $t \leq v \leq T$ define
$\tilde{g}(t,v,x)=E_T[X_v^{t,x}D_{tv} ]$.
Then, by the Markov property of $X_{t}$, 
$
\tilde{g}(t,v,X_{t}) = E_T[X_v^{t,X_{t}} \ D_{tv}\ | \ \filtration{t}{}  ]
$
and, by equation~(\ref{1eq2}), we have that 
\begin{align} \tilde{g}(t,s,X_{t})&= X_{t} +\alpha\beta\int\limits_t\limits^s
e^{-\beta (v-t)} \  dv - 2 \beta \int\limits_t\limits^s \tilde{g}(t,v,X_{t}) \ dv 
- b \sigma^2\int\limits_t\limits^s\int\limits_v\limits^T
e^{-\beta(v_1-t)}\ dv_1 \ dv - 2 c \sigma^2\int\limits_t\limits^s\int\limits_v\limits^T
\tilde{g}(t,v_1,X_{t})\ dv_1 \ dv.\label{1eq9b}
\end{align}
Equation~(\ref{1eq9b}) may be solved by considering, for $x\in \real$, the  nonlinear integral equation 
\begin{align} g(t,s,x )&= x +\alpha\beta\int\limits_t\limits^s
e^{-\beta (v-t)} \  dv - 2 \beta \int\limits_t\limits^s g(t,v,x) \ dv %
- b \sigma^2\int\limits_t\limits^s\int\limits_v\limits^T
e^{-\beta(v_1-t)}\ dv_1 \ dv - 2 c \sigma^2\int\limits_t\limits^s\int\limits_v\limits^T
g(t,v_1,x)\ dv_1 \ dv.\label{1eq9}
\end{align}

Differentiating equation (\ref{1eq9}) with twice respect to $s$, we obtain
the following equivalent ODE
\begin{align}
&g''(t,s,x)  = -\alpha \beta^2 e^{-\beta(s-t)} - 2 \beta g'(t,s,x)+b\sigma^2 e^{-\beta(s-t)}+2c\sigma^2 g(t,s,x)  \label{oneODEs} 
\end{align}
with boundary conditions
\begin{align}
&g(t,t,x )  = x \label{bc1} \\
&g'(t,t,x)  =\alpha\beta-2\beta x - b \sigma^2 \int\limits_t\limits^T
e^{-\beta(v_1-t)} \ dv_1-2c\sigma^2\int\limits_t\limits^T g(t,v_1,x) \ dv_1. \label{bc2}
\end{align} %
The ODE~(\ref{oneODEs}) has general solution
\begin{equation*}\label{oneGSolution}
g(t,s,x)=c_1(x) e^{(-\beta+\sqrt{\beta^2+2c\sigma^2})s}+c_2(x)
e^{(-\beta-\sqrt{\beta^2+2c\sigma^2})s}+\frac{b\sigma^2-\alpha\beta^2}{-\beta^2-2c\sigma^2}\cdot
e^{-\beta(s-t)}
\end{equation*}
for functions $c_{1}(x)$ and $c_{2}(x)$ depending only on $x$.
Applying the boundary conditions (\ref{bc1})-(\ref{bc2}) we find that
\begin{align*}
c_1 (x) &= \frac{  \alpha\beta +
\frac{\sigma^2b\eta^2-2c\sigma^2(b\sigma^2-\alpha\beta^2)}{\beta\eta^2}\cdot(e^{-\beta(T-t)}-1)+ x (-\beta+\eta)e^{(-\beta-\eta)(T-t)}
+\frac{(b\sigma^2-\alpha\beta^2)\beta}{\eta^2}+\frac{(b\sigma^2-\alpha\beta^2)(-\beta+\eta)}{\eta^2}\cdot e^{(-\beta-\eta)(T-t)} }{ e^{(-\beta+\eta)T}
 \Big[\beta+\eta+(-\beta+\eta)e^{2\eta(t-T)} \Big] },\\
c_2(x)&= \frac{ \alpha\beta+
\frac{\sigma^2b\eta^2-2c\sigma^2(b\sigma^2-\alpha\beta^2)}{\beta\eta^2}\cdot(e^{-\beta(T-t)}-1)+x (-\beta-\eta)e^{(-\beta+\eta)(T-t)}
+\frac{(b\sigma^2-\alpha\beta^2)\beta}{\eta^2}+\frac{(b\sigma^2-\alpha\beta^2)(-\beta-\eta)}{\eta^2}\cdot e^{(-\beta+\eta)(T-t)} }{
e^{(-\beta-\eta)T} \Big[\beta-\eta+(-\beta-\eta)e^{2\eta(T-t)}\Big]},
\end{align*}
where $\eta =\sqrt{\beta^2+2c\sigma^2}$.

Therefore, since $\tilde{g}(t,s,X_{t})$ satisfies equation~(\ref{1eq9b}) at $x=X_{t}$ we have by uniqueness of the boundary value problem (\ref{oneODEs})-(\ref{bc2}) that 
\begin{equation}
E_T[X_s^{t,X_t}D_{ts} |\mathcal{F}_t] = g(t,s,X_{t}) \label{eq:35}
\end{equation}
for $0\leq t \leq s \leq T$.  That is, $E_T[X_s^{t,X_t} D_{ts} |\mathcal{F}_t]$ is a deterministic function of $X_t$. 

Consider the conditional expectation in equation~(\ref{oneDiffP}).  By Fubini's Theorem and equation~(\ref{eq:35}) we have
\begin{align}
&E_T\bigg[L(t,T,X_t)\bigg|\mathcal{F}_t\bigg] 
=-\int\limits_t\limits^TE_T\bigg[(2cX_u^{t,X_t}+b)D_{tu} \bigg|\mathcal{F}_t\bigg] \ du %
=-b\int\limits_t\limits^T  D_{tu} \ du 
-2c\int\limits_t\limits^T g(t,u,X_{t})du \nonumber \\
&=\frac{c_1(X_{t})(-\beta-\eta)}{\sigma^2}\cdot(e^{(-\beta+\eta)T}-e^{(-\beta+\eta)t})+\frac{c_2(X_{t})(-\beta+\eta)}{\sigma^2}\cdot(e^{(-\beta-\eta)T}-e^{(-\beta-\eta)t}) %
+\frac{b\eta^2-2c(b\sigma^2-\alpha\beta^2)}{\beta\eta^2}\cdot(e^{-\beta(T-t)}-1). \label{1eq3}
\end{align}
Substitute (\ref{1eq3}) in (\ref{oneDiffP}) and write
\begin{equation} 
\frac{\partial P(t,T,x)}{\partial x}\bigg|_{x=X_t}
=P(t,T,X_{t}) \left( A(\tau)X_t + B(\tau) \right) \label{eq:P-ode}
\end{equation}
where
\begin{align}
A(\tau)&=\frac{2c(e^{2\eta\tau}-1)}{\beta-\eta+(-\beta-\eta)e^{2\eta\tau}},\label{oneA}\\
B(\tau)&= \frac{
\Big[\alpha\beta+\frac{(b\sigma^2-\alpha\beta^2)\beta}{\eta^2}\Big]\Big[(-\beta-\eta)(e^{2\eta\tau}-1)-2\eta\Big] 
 + \frac{(b\sigma^2-\alpha\beta^2)(\beta^2-\eta^2)}{\eta^2}\cdot(e^{2\eta\tau}-1)+2\eta e^{\eta\tau}\cdot\frac{b\sigma^2\eta^2-2c\sigma^2(b\sigma^2-\alpha\beta^2)}{\beta\eta^2}}{
\sigma^2\Big[(\beta+\eta)(e^{2\eta\tau}-1)+2\eta\Big] }
\label{oneB}
\end{align}
where\begin{equation*}\tau =T-t.\end{equation*} 
The ODE~(\ref{eq:P-ode}) has solution
 \begin{equation}\label{1eq4}
P(t,T,X_t)=\text{exp}\Big\{\frac{1}{2}\cdot
A(\tau)X_t^2+B(\tau)X_t+C(T,t)\Big\},
\end{equation}
where $C(T,t)$ is a differentiable function from $\mathbf{R}^2$ to
$\mathbf{R}$. By Feynman-Kac Theorem \cite{Karatzas}, $P(t,T,x)$
defined in (\ref{defineP}) satisfies the Cauchy
problem
\begin{align} 
0 &= \frac{\partial P(t,T,x)}{\partial t} + \beta(\alpha-x) \frac{\partial P(t,T,x)}{\partial x}  
+\frac{1}{2}\sigma^2 \frac{ \partial^2 P(t,T,x)}{(\partial x)^2} - (cx^2+bx+a)P(t,T,x), 
\label{FKTheorem} 
\\
1 &= P(T,T,x). \label{FKTheorem-2} 
\end{align}
Substituting (\ref{1eq4}) in PDE~(\ref{FKTheorem}) and dividing 
by $P(t,T,X_t)$, we have
\begin{align}\label{1eq5}
&\frac{1}{2}X_t^2 \frac{\partial A(\tau)}{\partial t}+ X_t\cdot
\frac{\partial B(\tau)}{\partial t}+\frac{\partial C(T,t)}{\partial
t}+ (\beta\alpha-\beta X_t)[A(\tau)X_t+B(\tau)] +\frac{1}{2}\sigma^2\Big[(A(\tau)X_t+B(\tau))^2+A(\tau)\Big]=cX_t^2+bX_t+a
\end{align}
Comparing the coefficients on both sides of equation (\ref{1eq5}),
obtain  an ODE for $C(T,t)$ 
\begin{align} 
0 &= \frac{\partial C(T,t)}{\partial t}+\beta\alpha B(\tau)+\frac{1}{2}\sigma^2B(\tau)^2+\frac{1}{2}\sigma^2A(\tau)-a   \label{eq:107} \\
0 &= C(T,T). \label{eq:108}
\end{align} 
Solving the ODE~(\ref{eq:107})-(\ref{eq:108}), and denoting  
$C(\tau)=C(T,t)$, 
we have
\begin{align}\label{oneC} C(\tau)&=
\bigg[\frac{b^2\sigma^2-2\alpha\beta^2b-2\alpha^2\beta^2c}{2\eta^2}\bigg]\tau+\frac{1}{2}\log\bigg\{\frac{2\eta
e^{(\eta+\beta)\tau}}{(\beta+\eta)e^{2\eta\tau}+\eta-\beta}\bigg\}\nonumber\\
&+\frac{2c(b\sigma^2-\alpha\beta^2)^2+2\beta(b+2c\alpha)(b\sigma^2-\alpha\beta^2)(\beta+\eta)e^{\eta\tau}-\beta^2\sigma^2(b+2c\alpha)^2}{\eta^3(\beta+\eta)\Big[(\beta+\eta)e^{2\eta\tau}+\eta-\beta\Big]}\nonumber\\
&+\frac{\beta^2\sigma^2(b+2c\alpha)^2-2c(b\sigma^2-\alpha\beta^2)^2-2\beta(b+2c\alpha)(b\sigma^2-\alpha\beta^2)(\beta+\eta)}{2\eta^4(\beta+\eta)}-a\tau.
\end{align}
Summarizing the material of this section we have the following result which shows that
$P(t,T,X_{t})$ is an exponential quadratic function of the factor process. 
\begin{theorem} For $t\in[0,T]$ and for all $x\in\mathbf R$, 
\begin{equation*} 
P(t,T,X_{t})=\exp{\left(\frac{1}{2}\cdot A(\tau)X_{t}^2+B(\tau)X_{t}+C(\tau)\right)},
\end{equation*}
where $A(\tau)$, $B(\tau)$ and $C(\tau)$ is given by (\ref{oneA}),
(\ref{oneB}) and (\ref{oneC}), respectively.
\end{theorem}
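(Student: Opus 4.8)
The plan is to assemble the identities established through this section into the stated exponential-quadratic form, the core of the argument being to show that the logarithmic $x$-derivative of $P(t,T,x)$, evaluated along the factor process, is an affine function of $X_t$. First I would invoke the Markov property to write $P(t,T)=P(t,T,X_t)$ with $P(t,T,x)$ given by~(\ref{defineP}), and differentiate under the expectation---justified by the growth and regularity conditions of \citet[pp.~117--123]{Friedman}---to obtain~(\ref{DefOfP}), in which the weight $L(t,T,x)$ carries the flow derivative $D_{tu}$. Because $L$ depends on the whole trajectory $X_u^{t,x}$ it cannot be removed from the $Q$-expectation, so the next step is to pass to the forward measure $Q^T$ via the Bayes formula~(\ref{ChangeOfMeasure}), producing~(\ref{oneDiffP}), which already has the shape of a linear equation in $x$ for the bond price.

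The \emph{key step} is then to prove that the forward-measure conditional expectation in~(\ref{oneDiffP}) is a deterministic affine function of $X_t$. I would reduce this to showing $E_T[X_s^{t,X_t}D_{ts}\mid\mathcal{F}_t]=\tilde{g}(t,s,X_t)$ for the deterministic function defined through~(\ref{1eq9}). To establish this I would characterize the $Q^T$-Brownian motion through Girsanov's Theorem, obtaining the flow dynamics~(\ref{oneXUnderFW}), apply It\^o's product rule to $X_s^{t,x}D_{ts}$, and take the $\mathcal{F}_t$-conditional $Q^T$-expectation. The two structural facts that close the computation are the flow property $X_{v_1}^{v,X_v}=X_{v_1}^{t,X_t}$ and the deterministic semigroup identity $D_{tv}D_{vv_1}=D_{tv_1}$ from~(\ref{eq:8b}); combined with the tower property these collapse the double integral in~(\ref{1eq7}) to~(\ref{1eq2}), so that $\tilde{g}(t,s,X_t)$ solves the integral equation~(\ref{1eq9b}). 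Differentiating its deterministic counterpart~(\ref{1eq9}) twice in $s$ gives the linear second-order ODE~(\ref{oneODEs}) with boundary data~(\ref{bc1})--(\ref{bc2}); since both the equation and the boundary conditions are affine in $x$, its explicit solution is affine in $x$, and uniqueness of the boundary value problem yields~(\ref{eq:35}).

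With the affine form in hand, I would substitute~(\ref{eq:35}) into~(\ref{oneDiffP}) and use Fubini to evaluate the integrals as in~(\ref{1eq3}), reading off~(\ref{eq:P-ode}) with $A(\tau)$ and $B(\tau)$ given by~(\ref{oneA})--(\ref{oneB}). Because the Gaussian factor $X_t$ has full support on $\mathbf{R}$, the relation valid at $x=X_t$ extends to a genuine first-order ODE in $x$ on all of $\mathbf{R}$; integrating it forces $P(t,T,x)=\exp\{\tfrac12 A(\tau)x^2+B(\tau)x+C(T,t)\}$ for all $x$, with an $x$-independent $C$, which is~(\ref{1eq4}). It remains to pin down $C$: by the Feynman--Kac Theorem $P(t,T,x)$ solves the Cauchy problem~(\ref{FKTheorem})--(\ref{FKTheorem-2}), so inserting the ansatz and comparing coefficients as in~(\ref{1eq5}) makes the $X_t^2$ and $X_t$ matches a consistency check on the already-derived $A(\tau)$ and $B(\tau)$, while the constant match leaves the first-order ODE~(\ref{eq:107}) with terminal condition~(\ref{eq:108}); solving it gives the explicit $C(\tau)$ of~(\ref{oneC}) and completes the proof.

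I expect the main obstacle to be the deterministic-affine claim~(\ref{eq:35}), since this is exactly the point where the multidimensional affine argument of \citet{Elliott} had a gap. Here it is precisely the deterministic semigroup property of $D$---available because the Gaussian factor makes $D_{ts}=e^{-\beta(s-t)}$ nonrandom---that lets the tower property close the recursion without any approximation lemma. It is worth emphasizing that the Gaussian structure alone does not render the change of measure superfluous: the weight $L$ remains correlated with $\exp(-\int_{t}^{T} r(X_u^{t,x})\,du)$ under $Q$, so the passage to $Q^T$ in~(\ref{oneDiffP}) is what makes the conditional expectation tractable, and this is the feature the one-dimensional example is designed to illustrate. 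As an alternative, once $A(\tau)$, $B(\tau)$, $C(\tau)$ are in hand one could bypass the derivation and simply verify that the stated ansatz satisfies the Cauchy problem~(\ref{FKTheorem})--(\ref{FKTheorem-2}) and invoke uniqueness, but the derivational route above is the one that makes the role of the measure change transparent.
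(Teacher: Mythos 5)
Your proposal is correct and follows essentially the same route as the paper: differentiation of $P(t,T,x)$ in $x$, passage to the forward measure, the It\^o/tower-property computation exploiting the deterministic semigroup identity $D_{tv}D_{vv_1}=D_{tv_1}$ to obtain the affine function $g(t,s,x)$, and then the Feynman--Kac substitution to determine $C(\tau)$. Your added remark that the full support of the Gaussian $X_t$ is what upgrades the relation at $x=X_t$ to an ODE in $x$ on all of $\mathbf{R}$ is a point the paper leaves implicit, but it does not change the argument.
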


The Markov property, $P(t,T)=P(t,T,X_{t})$ gives the following characterization of the zero coupon bond price.
\begin{corollary}\label{cor:flow-1d-bond-price}
If the factor process is given by (\ref{onefactorprocess}) and the
short rate is represented by the function (\ref{oneshortrate}), the
zero-coupon bond price
is\begin{equation*}P(t,T)=\text{exp}\Big\{\frac{1}{2}\cdot
A(\tau)X_t^2+B(\tau)X_t+C(T,t)\Big\},
\end{equation*}where $A(\tau)$, $B(\tau)$ and $C(\tau)$ is given by (\ref{oneA}),
(\ref{oneB}) and (\ref{oneC}), respectively.
\end{corollary}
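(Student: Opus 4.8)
The plan is to follow the stochastic flows method: express the bond price as a function $P(t,T,x)$ of the initial value of the factor process, differentiate with respect to this initial condition, and exploit the $Q$-almost-sure differentiability of the flow to derive an ordinary differential equation in $x$ for $P(t,T,x)$ whose solution is forced to be exponential-quadratic. By the Markov property of $X$ we have $P(t,T)=P(t,T,X_t)$ with $P(t,T,x)$ as in (\ref{defineP}), so it suffices to identify $P(t,T,x)$ with the claimed form. First I would differentiate (\ref{defineP}) under the expectation---justified by the linear growth and global Lipschitz regularity of the coefficients in (\ref{onefactorprocess})--(\ref{oneshortrate})---to obtain (\ref{DefOfP}), where the integrand $L(t,T,x)$ involves the deterministic flow derivative $D_{tu}=e^{-\beta(u-t)}$ from (\ref{eq:8b}).

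The crucial observation is that, in contrast to the exponential-affine case treated by \citet{Elliott} and \citet{hyndman:Gauss}, the factor $L(t,T,x)$ depends on the whole trajectory $X_u^{t,x}$ through the quadratic short rate, so it cannot be pulled out of the expectation despite the Gaussian dynamics. To linearize, I would pass to the $T$-forward measure $Q^T$ via Bayes' rule, obtaining (\ref{oneDiffP}), which expresses $\partial_x P$ evaluated at $x=X_t$ as $P(t,T,X_t)$ times the forward-conditional expectation $E_T[L(t,T,X_t)\mid\mathcal F_t]$.

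The main obstacle is showing that this conditional expectation is a deterministic function of $X_t$, and computing it explicitly. The key quantity is $\tilde g(t,s,x)=E_T[X_v^{t,x}D_{tv}]$. Applying It\^o's product rule to $X_s^{t,x}D_{ts}$ under $Q^T$ and taking the $\mathcal F_t$-conditional expectation, together with the flow property $X_{v_1}^{v,X_v}=X_{v_1}^{t,X_t}$ and the semigroup identity $D_{tv}D_{vv_1}=D_{tv_1}$, and using the tower property to collapse the nested conditional expectations, yields the closed integral equation (\ref{1eq9b}) for $\tilde g$. Differentiating twice in $s$ converts this into the linear second-order ODE (\ref{oneODEs}) with boundary conditions (\ref{bc1})--(\ref{bc2}), whose unique solution is a deterministic function $g$; this establishes (\ref{eq:35}). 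Substituting the resulting formula (\ref{1eq3}) for $E_T[L\mid\mathcal F_t]$ into (\ref{oneDiffP}) gives the linear first-order ODE (\ref{eq:P-ode}) in $x$ with coefficients $A(\tau)$, $B(\tau)$ from (\ref{oneA})--(\ref{oneB}), whose solution is precisely the exponential-quadratic form (\ref{1eq4}).

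Finally, to pin down the scalar term $C(\tau)$, I would invoke the Feynman-Kac representation so that $P(t,T,x)$ solves the Cauchy problem (\ref{FKTheorem})--(\ref{FKTheorem-2}); inserting the ansatz (\ref{1eq4}) and matching the coefficients of $x^2$, of $x$, and the constant reduces everything to the scalar ODE (\ref{eq:107})--(\ref{eq:108}), whose solution is (\ref{oneC}). I expect the heaviest bookkeeping to be in integrating the second-order ODE (\ref{oneODEs}) and in the subsequent integrations that produce $A$, $B$, and $C$ in closed form; the conceptual crux, however, is the change of measure, which is what renders the conditional expectation deterministic \emph{despite} the quadratic (rather than affine) dependence of the short rate on the Gaussian factor, and which is the point this one-dimensional computation is designed to illustrate.
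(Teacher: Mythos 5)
Your proposal is correct and follows essentially the same route as the paper's own derivation: differentiating $P(t,T,x)$ in $x$, passing to the forward measure so that $E_T[L(t,T,X_t)\mid\mathcal{F}_t]$ becomes a deterministic function of $X_t$ via the integral equation for $\tilde g$ and its associated second-order ODE, solving the resulting linear ODE in $x$ for the exponential-quadratic form, and pinning down $C(\tau)$ by Feynman--Kac coefficient matching. No gaps; the emphasis you place on the necessity of the measure change despite the Gaussian factor is exactly the point the paper makes.
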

Corollary~\ref{cor:flow-1d-bond-price} agrees with the results in \citet[pp. 487-488]{Nawalkha}.

In higher dimensional cases the stochastic flow method
requires the addition of some parametric restrictions to the model
similar to the ATSM case which was studied by \cite{Martino}.
Nevertheless, the example considered in this section illustrates the 
similarities of the stochastic flow method in the one-dimensional QTSM with 
Gaussian factor process to the one-dimensional ATSM model and provides further motivation for 
the necessity of the change of measure.

\section{Conclusions}\label{sec:conc}

In this paper we have extended the FBSDE approach, introduced in \cite{Cody2009} in the context of affine term structure models, to quadratic term structure models (QTSM) where the factor process is Gaussian and the riskless interest rate is a quadratic functional of the factor process.  After characterizing the factor process and the bond price in terms of a coupled quadratic FBSDE under the forward measure we prove the existence and uniqueness of the FBSDE and provide an explicit solution.  This approach provides new examples of quadratic FBSDEs with explicit solutions.   We extend the FBSDE approach to consider the futures price and forward price of a risky asset with spot price given by an exponential quadratic functional of the factor process, which we term a quadratic price model.  Our results are motivated, as in the ATSM case, by the stochastic flows approach of \citet{Elliott}, and we briefly consider the one-dimensional QTSM in order to illustrate why the change of measure technique is necessary even though the factors are Gaussian.  The results of this paper can be easily extended to consider a Gaussian factor model with time dependent coefficients.

\appendix
\section{Appendix}

We consider the solvability of a non-symmetric Riccati-type matrix differential equation which includes as 
special cases those necessary for the solvability of the FBSDEs given in Theorems~\ref{TheoremFBA}, \ref{th:fut}, and \ref{th:fwd},
as well as their corollaries, which characterize the bond, futures, and forward prices.  While the result presented in this appendix is more general than is required for our purposes we believe it is of independent interest as an example of a solvable non-symmetric Riccati-type equation.

The method of proof involves decomposing the non-symmetric Riccati-type differential equation into the sum of the classical symmetric matrix Riccati equation of linear quadratic optimal control (LQC), which is employed by \citet{gombani2012arbitrage}, and a skew-symmetric matrix.   For further details on the Riccati equations of LQC  see, for example, \citet{MR0335000} and \citet{MR0364275}.  Comprehensive information on Riccati equations can be found in \citet{MR1367089} and \citet{MR1997753}.

We relate our results to those presented in \citet{gombani2012arbitrage} for the bond price in a QTSM.  However, our approach and model parameterization is different from \citet{gombani2012arbitrage} so we obtain slightly different results.  Nevertheless, there is a strong relationship as the next result shows.

\begin{theorem}\label{th:Riccati}
Consider the general $(n\times n)$-matrix Riccati-type differential equation for $R_{2}(t)$
\begin{align}
& \frac{dR_{2}(t)}{dt} + ( [R_{2}(t)]\transpose + R_{2}(t) )A + \frac{1}{2}( [R_{2}(t)]\transpose + R_{2}(t) ) \sigma\sigma\transpose 
( [R_{2}(t)]\transpose + R_{2}(t) ) - \Upsilon = 0  \label{eq:App_RE1} \\
& R_{2}(T) =   \Theta \label{eq:App_RE2}
\end{align}
where $(\Upsilon+\Upsilon\transpose)$ is positive semidefinite and $(\Theta + \Theta\transpose)$ is negative semidefinite.  Then 
\begin{list}{(\roman{enumi})}{\usecounter{enumi}\addtolength{\itemsep}{-0.5\baselineskip}\setlength{\topsep}{0.0pc}\setlength{\partopsep}{0.0pc}\setlength{\listparindent}{0.25in}\setlength{\parsep}{0.0pc}}
\item  a solution $R_{2}(t)$ to (\ref{eq:App_RE1})-(\ref{eq:App_RE2}) always exists on the entire interval $[0,T]$ and can be expressed as
\begin{equation}
R_{2}(t) = - (U(t) + V(t))
\label{eq:R2_decomp}
\end{equation}
where $U(t)$ satisfies
\begin{align}
& \frac{d}{dt} U(t) +  U(t) A + A \transpose U(t) - 2 U(t) \sigma\sigma\transpose U(t) + Q = 0 \label{eq:gb_re1} \\
& U(T) =  C_{1} \label{eq:gb_re2}
\end{align}
with $Q= \frac{1}{2}(\Upsilon + \Upsilon\transpose)$ and $C_{1}=-\frac{1}{2}( \Theta + \Theta\transpose)$ and
$V(t)$ is given by
\begin{equation}
V(t) = \tilde{C}_{1} + \int_{t}^{T}[U(s)A - A\transpose U(s) + \tilde{Q} ]\ ds
\label{eq:V_inteq}
\end{equation}
where $\tilde{Q} = \frac{1}{2}(\Upsilon - \Upsilon\transpose)$ and $\tilde{C}_{1} = -\frac{1}{2}(\Theta - \Theta\transpose)$.  
\item
A solution to (\ref{eq:gb_re1})-(\ref{eq:gb_re2}) always exists in the interval $[0,T]$ and it can be expressed as  
$U(t) = Y(t)X(t)^{-1}$ where
$X$ and $Y$ satisfy the linear differential equation
\begin{equation}
\frac{d}{dt}\left[ \begin{array}{c} {X}(t) \\ {Y}(t) \end{array} \right]
=  \left[ \begin{array}{cc} A & -2\sigma\sigma\transpose \\
- Q & -A\transpose 
\end{array} \right] \left[ \begin{array}{c} {X}(t) \\ {Y}(t) \end{array} \right]
\quad \quad \left[ \begin{array}{c} {X}(T) \\ {Y}(T) \end{array} \right]
= \left[ \begin{array}{c} I  \\ C_{1} \end{array} \right].
\label{eq:App_ODEsys}
\end{equation}
Moreover, if $\Phi(t,s)$ denotes the fundamental solution (transition matrix) associated with 
\begin{equation}
\frac{d}{dt}x(t) = [A - 2\sigma\sigma\transpose U(t)] x(t)
\label{eq:trans1}
\end{equation}
then $X$ and $Y$ admit the following interpretation
$$X(t) = \Phi(t,T)$$
and 
$$ Y(t) = U(t)\Phi(t,T).$$
\end{list}
\end{theorem}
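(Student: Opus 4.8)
The plan is to exploit the symmetric/skew-symmetric splitting that already underlies the statement: write $R_{2}(t) = -(U(t) + V(t))$ with $U(t)$ symmetric and $V(t)$ skew-symmetric, so that $U\transpose = U$ and $V\transpose = -V$ give $R_{2}\transpose(t) + R_{2}(t) = -2U(t)$. Thus the quadratic term in (\ref{eq:App_RE1}) depends only on the symmetric part. First I would substitute $R_{2} = -(U+V)$ into (\ref{eq:App_RE1}) and separate the resulting matrix identity into its symmetric and skew-symmetric components, using that a matrix ODE holds if and only if both components vanish. The term $2U\sigma\sigma\transpose U$ is automatically symmetric, $-\Upsilon$ contributes $-Q$ to the symmetric part and $-\tilde{Q}$ to the skew part, and $-2UA$ splits as $-(UA + A\transpose U)$ plus $-(UA - A\transpose U)$. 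Collecting the symmetric part reproduces exactly the linear-quadratic-control Riccati equation (\ref{eq:gb_re1}) for $U$, while the skew part yields the linear relation $\dot{V} = -(UA - A\transpose U) - \tilde{Q}$, which integrates backward from $T$ to the explicit formula (\ref{eq:V_inteq}). The terminal conditions are then read off by splitting $\Theta$: its symmetric part gives $U(T) = -\tfrac{1}{2}(\Theta+\Theta\transpose) = C_{1}$ and its skew part gives $V(T) = -\tfrac{1}{2}(\Theta - \Theta\transpose) = \tilde{C}_{1}$. This reduces all of part~(i) to the existence of $U$ on $[0,T]$, since $V$ is then defined by a convergent integral.

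For part~(ii) I would linearize the symmetric Riccati equation by the classical substitution $U = YX^{-1}$. The plan is to first solve the linear Hamiltonian system (\ref{eq:App_ODEsys}), which has a global solution on $[0,T]$ because its coefficient matrix is constant, and then verify by direct differentiation that whenever $X(t)$ is invertible the matrix $U = YX^{-1}$ satisfies (\ref{eq:gb_re1})-(\ref{eq:gb_re2}): using $\dot{X} = (A - 2\sigma\sigma\transpose U)X$ and $\dot{Y} = -QX - A\transpose Y$ one obtains $\dot{U} + UA + A\transpose U - 2U\sigma\sigma\transpose U + Q = 0$ after right-multiplying by $X^{-1}$. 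Hence everything comes down to showing that $X(t)$ is nonsingular throughout $[0,T]$, and I expect this to be the main obstacle.

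To establish invertibility of $X$ I would use the two semidefiniteness hypotheses, which translate into $Q = \tfrac{1}{2}(\Upsilon + \Upsilon\transpose) \geq 0$ and $C_{1} = -\tfrac{1}{2}(\Theta + \Theta\transpose) \geq 0$. A short computation shows $\frac{d}{dt}(X\transpose Y) = -X\transpose Q X - 2 Y\transpose \sigma\sigma\transpose Y$, which is symmetric, so $X\transpose Y$ is symmetric on $[0,T]$, matching the symmetry of $U$. Suppose $X(t_{0})\xi = 0$ for some $t_{0} \in [0,T)$ and some $\xi \neq 0$, and set $x(t) = X(t)\xi$, $y(t) = Y(t)\xi$, and $\phi(t) = \xi\transpose X(t)\transpose Y(t)\xi = x(t)\transpose y(t)$. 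Then $\phi(t_{0}) = 0$, $\phi(T) = \xi\transpose C_{1} \xi \geq 0$, and $\dot{\phi} = -x\transpose Q x - 2 y\transpose \sigma\sigma\transpose y \leq 0$, so $\phi$ is nonincreasing; combined with $\phi(t_{0})=0 \leq \phi(T)$ this forces $\phi \equiv 0$ on $[t_{0},T]$ and hence $x\transpose Q x \equiv 0$ and $\sigma\transpose y \equiv 0$ there. On this interval the closed-loop dynamics reduce to $\dot{x} = Ax$ with $x(t_{0})=0$, so $x \equiv 0$ by uniqueness of solutions to linear ODEs; but $x(T) = X(T)\xi = \xi \neq 0$, a contradiction. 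Therefore $X(t)$ is invertible on all of $[0,T]$, and $U = YX^{-1}$ is the desired global solution. Finally, since $\dot{X} = (A - 2\sigma\sigma\transpose U)X$ with $X(T)=I$, uniqueness of the transition matrix of (\ref{eq:trans1}) identifies $X(t) = \Phi(t,T)$ and consequently $Y(t) = U(t)X(t) = U(t)\Phi(t,T)$, which closes part~(ii) and, through the integral formula for $V$, also part~(i).
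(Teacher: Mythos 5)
Your proof is correct. For part~(i) it takes essentially the same route as the paper: split $R_{2}$ as $-(U+V)$ with $U$ symmetric and $V$ skew-symmetric, note that $[R_{2}]\transpose+R_{2}=-2U$ so the quadratic term sees only the symmetric part, and match the symmetric and skew-symmetric components of the equation and of the terminal datum $\Theta$ to obtain the LQC Riccati equation for $U$ and the linear integral formula for $V$. The genuine difference is in part~(ii): the paper simply cites Theorem~B.1 of \citet{gombani2012arbitrage} for the global solvability of (\ref{eq:gb_re1})-(\ref{eq:gb_re2}), whereas you prove it. Your argument is the standard linear-quadratic-control one and it is complete: the Hamiltonian system (\ref{eq:App_ODEsys}) has a global solution because its coefficient matrix is constant; the identity $\frac{d}{dt}(X\transpose Y)=-X\transpose Q X-2Y\transpose\sigma\sigma\transpose Y$ together with $X(T)\transpose Y(T)=C_{1}$ makes $\phi(t)=\xi\transpose X(t)\transpose Y(t)\xi$ nonincreasing with $\phi(T)=\xi\transpose C_{1}\xi\geq 0$, so $X(t_{0})\xi=0$ forces $\phi\equiv 0$, hence $\sigma\transpose Y\xi\equiv 0$, hence $\dot{x}=Ax$ with $x(t_{0})=0$, contradicting $x(T)=\xi\neq 0$. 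This is exactly where the two semidefiniteness hypotheses ($Q\geq 0$ and $C_{1}\geq 0$) enter, and your version makes that visible, which the paper's citation hides; the trade-off is only length. One cosmetic remark: the reduced dynamics $\dot{x}=Ax$ on the degenerate interval are the open-loop Hamiltonian dynamics rather than ``closed-loop'' dynamics, but this does not affect the argument.
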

\begin{proof}
The proof of \textit{(ii)} follows immediately as a special case of \citet[Theorem~B.1]{gombani2012arbitrage}.  To prove \textit{(i)} first suppose $R_{2}(t)$ is a solution to (\ref{eq:App_RE1})-(\ref{eq:App_RE2}) on the interval $[0,T]$ and define the symmetric matrix 
\begin{equation}
U(t) = - \frac{1}{2}(R_{2}(t) +  [R_{2}(t)]\transpose ). \label{eq:Usym}
\end{equation}
Differentiating equation~(\ref{eq:Usym}) with respect to $t$ we have, by (\ref{eq:App_RE1}),  that $U(t)$ satisfies 
\begin{align*}
\frac{d}{dt}U(t) &= -\frac{1}{2}\left( \frac{d}{dt}R_{2}(t) + \frac{d}{dt}[R_{2}(t)]\transpose\right) \\
&= \frac{1}{2}\left\{ ( [R_{2}(t)]\transpose + R_{2}(t) )A + \frac{1}{2} ( [R_{2}(t)]\transpose + R_{2}(t) ) \sigma\sigma\transpose ( [R_{2}(t)]\transpose + R_{2}(t) ) - \Upsilon \right\}  \\
&+ \frac{1}{2}\left\{A\transpose ( [R_{2}(t)]\transpose + R_{2}(t) ) 
+ \frac{1}{2}( [R_{2}(t)]\transpose + R_{2}(t) )\sigma\sigma\transpose ( [R_{2}(t)]\transpose + R_{2}(t) ) - \Upsilon\transpose \right\} 
\\
&= -U(t)A - A\transpose U(t) + 2U(t)\sigma\sigma\transpose U(t) - \frac{1}{2}(\Upsilon + \Upsilon\transpose )
\end{align*}
which gives (\ref{eq:gb_re1}).  Evaluating (\ref{eq:Usym}) at $t=T$ and applying  (\ref{eq:App_RE2}) gives 
$$U(T) = -\frac{1}{2}(R_{2}(T) + [R_{2}(T)]\transpose) = -\frac{1}{2}[\Theta + \Theta\transpose] = C_{1}$$
which is (\ref{eq:gb_re2}). Next, define the skew-symmetric matrix $V(t)$ by
\begin{equation}
V(t) = -\frac{1}{2}(R_{2}(t) - [R_{2}(t)]\transpose)
\label{eq:Vskew}
\end{equation}
and differentiate with respect to $t$ to find that $V(t)$ satisfies
\begin{align}
\frac{d}{dt} V(t) &= -\frac{1}{2}\left( \frac{d}{dt} R_{2}(t) - \frac{d}{dt}[R_{2}(t)]\transpose \right) \nonumber \\
& = -\frac{1}{2} \left\{ -( [R_{2}(t)]\transpose + R_{2}(t) )A - \frac{1}{2}( [R_{2}(t)]\transpose + R_{2}(t) )\sigma\sigma\transpose ( [R_{2}(t)]\transpose + R_{2}(t) ) + \Upsilon  \right. \nonumber \\
&+ \left. A\transpose ( [R_{2}(t)]\transpose + R_{2}(t) ) + \frac{1}{2} ( [R_{2}(t)]\transpose + R_{2}(t) )\sigma\sigma\transpose ( [R_{2}(t)]\transpose + R_{2}(t) ) - \Upsilon\transpose \right\} \nonumber \\
&= - \left\{ U(t)A - A\transpose U(t) + \tilde{Q} \right\}.
 \label{eq:Vde}
\end{align}
Evaluate (\ref{eq:Vskew}) at $t=T$ and apply (\ref{eq:App_RE2}) to find
\begin{equation}
V(T) = -\frac{1}{2}(R_{2}(T) - [R_{2}(T)]\transpose) = -\frac{1}{2}(\Theta - \Theta\transpose) = \tilde{C}_{1}.  
\label{eq:Vtc}
\end{equation}
Therefore, solving (\ref{eq:Vde})-(\ref{eq:Vtc}) gives that $V(t)$ satisfies (\ref{eq:V_inteq}).

Conversely, suppose that $R_{2}(t)$ is defined by (\ref{eq:R2_decomp})-(\ref{eq:V_inteq}). Note that any solution to  (\ref{eq:gb_re1})-(\ref{eq:gb_re2}) is symmetric and $V(t)$ given by (\ref{eq:V_inteq}) is skew-symmetric.  Therefore, by the uniqueness of the decomposition of a square matrix into the sum of symmetric and skew-symmetric matrices, we must have $U(t) = -\frac{1}{2}(R_{2}(t) +  [R_{2}(t)]\transpose )$ and $V(t) = -\frac{1}{2}(R_{2}(t) - [R_{2}(t)]\transpose)$. Then $R_{2}(t)$ satisfies
\begin{align*}
\frac{d}{dt}R_{2}(t) &= -\frac{d}{dt}U(t) - \frac{d}{dt}V(t) = 
-[-U(t)A - A\transpose U(t) + 2 U(t)\sigma\sigma\transpose U(t) - Q] - [ -U(t)A + A\transpose U(t) - \tilde{Q} ]   \\
&= 2 U(t)A -2 U(t)\sigma\sigma\transpose U(t) + Q + \tilde{Q} \\
&=  2\left[ -\frac{1}{2}(R_{2}(t)+[R_{2}(t)]\transpose) \right] A - 2\left[ -\frac{1}{2}(R_{2}(t)+[R_{2}(t)]\transpose) \right] \sigma\sigma\transpose  \left[ -\frac{1}{2}(R_{2}(t)+[R_{2}(t)]\transpose) \right] + Q + \tilde{Q} \\
&= - (R_{2}(t)+[R_{2}(t)]\transpose)A - \frac{1}{2} (R_{2}(t)+[R_{2}(t)]\transpose) \sigma \sigma\transpose (R_{2}(t)+[R_{2}(t)]\transpose) + \Upsilon
\end{align*}
which is (\ref{eq:App_RE1}) as desired.  Finally, evaluating  (\ref{eq:R2_decomp}) at $t=T$ gives
$$R_{2}(T) = - (U(T) + V(T)) = - C_{1} - \tilde{C}_{1} =  \frac{1}{2}(\Theta + \Theta\transpose) + \frac{1}{2}(\Theta - \Theta\transpose) = \Theta$$
so that  (\ref{eq:App_RE2}) is satisfied.
\end{proof}

\begin{remark}\label{rem:App}
Define the Hamiltonian associated with (\ref{eq:App_ODEsys})
$$ H = \left[ \begin{array}{cc} A & -2\sigma\sigma\transpose \\
- Q & -A\transpose 
\end{array} \right].
$$
Since $H$ is constant there is an explicit representation
$$
\left[ \begin{array}{c} {X}(t) \\ {Y}(t) \end{array} \right]
= e^{H(t-T)}\left[ \begin{array}{c} I \\ C_{1} \end{array} \right].
$$
Therefore, by Theorem~\ref{th:Riccati}, $R_{2}(t)$ has an explicit solution on the interval $[0,T]$.
\end{remark}

\begin{remark}
In the case of the bond price $\Upsilon=\Gamma$ and $\Theta = 0$.  Therefore, by Assumption~\ref{ModelA}, since $\Gamma$ is positive semidefinite, and implicitly symmetric, we have that $\tilde{Q}=0$.  Further, since $\tilde{C}_{1}=0$ we have that, in the case of the bond price, equation (\ref{eq:V_inteq}) simplifies to 
$$V(t) = \int_{t}^{T} [U(s) A - A\transpose U(s)] \ ds.$$
In order to provide an exact correspondence with the results of \citet{gombani2012arbitrage} it seems as though we should have 
$V(t)=0$ for all $t\in [0,T]$ which is clearly true in the one-dimensional case where $A$ is a scalar, however, in the  multi-dimensional case the result is not obvious.  Nevertheless, owing to the different parameterization and methods of this paper and those of \citet{gombani2012arbitrage}, the difference should be no more alarming than that between the Riccati equations obtained by substitution into the term-structure PDE in \citet[equations~(2.4)-(2.6)]{gombani2012arbitrage} and those obtained from the LQC approach in \citet[equation~(3.12)]{gombani2012arbitrage}.  %
\end{remark}

We next consider the solution of the associated differential equation for $R_{1}(t)$ special cases of which appear in Theorems~\ref{TheoremFBA}, \ref{th:fut}, and \ref{th:fwd} and their corollaries for the bond, futures, and forward prices.
\begin{corollary}\label{CorollaryA}
Let $R_{1}(t)$ be the solution to 
\begin{align}
& 0 = \frac{d}{dt}R_{1}(t) + R_{1}(t) \left\{ A  +  \sigma\sigma\transpose \left( [R_{2}(t)]\transpose + R_{2}(t) \right) \right\} +  B \transpose \left( [R_{2}(t)]\transpose + R_{2}(t) \right) - \Psi \label{ode:R1} \\
& R_{1}(T) = \theta . \label{term:R1}
\end{align}
Then $R_{1}(t)$ can be written as
\begin{equation}
R_{1}(t) = \left( \theta - \int_{t}^{T} [ 2 B\transpose Y(s) + \Psi X(s)] \ ds \right) [X(t)]^{-1}.
\label{eq:R1-sol}
\end{equation}
\end{corollary}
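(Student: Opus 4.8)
The plan is to recognize equation~(\ref{ode:R1}) as an inhomogeneous first-order linear ordinary differential equation for the row vector $R_{1}(t)$ and to solve it by an integrating factor built from the fundamental solution $X(t)=\Phi(t,T)$ supplied by Theorem~\ref{th:Riccati}. First I would simplify the coefficient in braces. Since $U(t)$ from (\ref{eq:Usym}) is symmetric and $V(t)$ from (\ref{eq:Vskew}) is skew-symmetric, the decomposition (\ref{eq:R2_decomp}) gives $R_{2}(t)+[R_{2}(t)]\transpose = -2U(t)$, so that
$$A + \sigma\sigma\transpose\left([R_{2}(t)]\transpose + R_{2}(t)\right) = A - 2\sigma\sigma\transpose U(t),$$
which is precisely the coefficient matrix $M(t) := A - 2\sigma\sigma\transpose U(t)$ appearing in the transition equation~(\ref{eq:trans1}). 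Likewise the inhomogeneous term $B\transpose([R_{2}(t)]\transpose+R_{2}(t))$ becomes $-2B\transpose U(t)$, so that (\ref{ode:R1}) reduces to the clean linear form $\dot R_{1}(t) + R_{1}(t)M(t) = 2B\transpose U(t) + \Psi$.

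Next I would exploit the fact that $X(t)=\Phi(t,T)$ satisfies $\dot X(t) = M(t)X(t)$ with $X(T)=I$, so that $X(t)$ serves as the natural integrating factor for the row-vector equation. Differentiating the product $R_{1}(t)X(t)$ and using the ODEs for $R_{1}$ and for $X$, I would obtain
$$\frac{d}{dt}\big[R_{1}(t)X(t)\big] = \big[\dot R_{1}(t) + R_{1}(t)M(t)\big]X(t) = \big[2B\transpose U(t)+\Psi\big]X(t).$$
The key simplification here is the identity $U(t)X(t)=Y(t)$, which follows immediately from the interpretation $Y(t)=U(t)\Phi(t,T)$ established in part~\textit{(ii)} of Theorem~\ref{th:Riccati}. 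Hence the right-hand side collapses to the explicit, $R_{1}$-free expression $2B\transpose Y(t) + \Psi X(t)$.

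Finally I would integrate from $t$ to $T$. Using $X(T)=I$ together with the terminal condition $R_{1}(T)=\theta$ yields $\theta - R_{1}(t)X(t) = \int_{t}^{T}[2B\transpose Y(s)+\Psi X(s)]\,ds$, and solving for $R_{1}(t)$ produces (\ref{eq:R1-sol}) after right-multiplication by $[X(t)]^{-1}$. I expect no serious obstacle in this argument: the only point requiring a word of justification is the invertibility of $X(t)$, but since $X(t)=\Phi(t,T)$ is a transition (fundamental) matrix of a linear system, it is nonsingular for every $t\in[0,T]$, so $[X(t)]^{-1}$ is well defined throughout. As a concluding sanity check I would verify that (\ref{eq:R1-sol}) satisfies the terminal condition by setting $t=T$, which gives $R_{1}(T)=\theta\,[X(T)]^{-1} = \theta$ as required.
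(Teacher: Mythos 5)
Your argument is correct, and it reaches formula~(\ref{eq:R1-sol}) by a route that is genuinely more direct than the one in the paper. The paper first transposes the equation into a column-vector ODE for $[R_{1}(t)]\transpose$ with coefficient matrix $2U(t)\sigma\sigma\transpose - A\transpose$, solves it by variation of parameters using a fundamental matrix of the \emph{adjoint} system~(\ref{eq:adjoint}), and then converts back via the duality identity $\Psi(t,s) = \left([\Phi(t,s)]^{-1}\right)\transpose = [\Phi(s,t)]\transpose$ before substituting $U(s)=Y(s)[X(s)]^{-1}$ and $\Phi(s,t)=X(s)[X(t)]^{-1}$. You instead keep the equation in row-vector form and observe that right-multiplication by $X(t)=\Phi(t,T)$, which satisfies $\dot X = (A-2\sigma\sigma\transpose U(t))X$, is already the correct integrating factor, so that $\frac{d}{dt}[R_{1}(t)X(t)] = [2B\transpose U(t)+\Psi]X(t) = 2B\transpose Y(t)+\Psi X(t)$ by the identity $U(t)X(t)=Y(t)$ from Theorem~\ref{th:Riccati}(ii); one integration then finishes the proof. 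The two derivations rest on the same three facts ($R_{2}+R_{2}\transpose=-2U$, $Y=UX$, and $X=\Phi(\cdot,T)$ with $X(T)=I$), but yours dispenses entirely with the adjoint system and the transition-matrix duality, which is where most of the bookkeeping (and the risk of transposition errors) in the paper's version lives. What the paper's detour buys is the explicit identification of the integrating factor as the adjoint flow, which is conceptually natural in the LQC setting it is drawing on; what your version buys is brevity and a self-contained verification, including the observation---implicit in the paper---that $[X(t)]^{-1}$ exists because $X(t)$ is a transition matrix of a linear system. No gaps.
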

\begin{proof}
The result may be verified by simple differentiation, however, we provide the construction.  Taking the transpose of equation (\ref{ode:R1}) and applying the fact that $U(t) = -\frac{1}{2}(R_{2}(t) + [R_{2}(t)]\transpose)$ we may write
\begin{equation}
\frac{d}{dt}[R_{1}(t)]\transpose = ( 2U(t) \sigma\sigma\transpose - A\transpose ) [R_{1}(t)]\transpose + K(t) \label{eq:R1-2}
\end{equation}
where $K(t) = (2 U(t) B + \Psi\transpose )$.  Let $\Psi(t)$ be the $(n\times n)$-matrix whose columns are the vectors which form a fundamental set of solutions to 
\begin{equation}
\frac{d p(t)}{dt} = (2 U(t)\sigma\sigma\transpose - A\transpose) p(t) \label{eq:adjoint}
\end{equation}
and note that equation~(\ref{eq:adjoint}) is the adjoint equation corresponding to equation~(\ref{eq:trans1}). Then, by variation of parameters, the solution to (\ref{eq:R1-2}) with terminal condition~(\ref{term:R1})
is 
$$
[R_{1}(t)]\transpose = \Psi(t) \left( [\Psi(T)]^{-1}\theta\transpose - \int_{t}^{T} [\Psi(s)]^{-1} K(s) \ ds \right).
$$
Define
$$
\Psi(t,s) = \Psi(t)\Psi^{-1}(s)
$$
and note that $\Psi(t,s)$ is the transition matrix of (\ref{eq:adjoint}).  We also have that, since
$\Phi(t,s)$ is the transition matrix of (\ref{eq:trans1}) and (\ref{eq:adjoint}) is the adjoint, that
$$\Psi(t,s) = \left([\Phi(t,s)]^{-1}\right)\transpose
= [\Phi(s,t)]\transpose.
$$
Therefore, from (\ref{eq:R1-2}), we may write
\begin{align}
R_{1}(t) &= \theta [\Psi(t,T)]\transpose - \int_{t}^{T} [K(s)]\transpose [\Psi(t,s)]\transpose \ ds \nonumber \\
&= \theta \Phi(T,t) - \int_{t}^{T} [2 B\transpose U(s) + \Psi ]\transpose \Phi(s,t) \ ds.
\label{eq:R1-p4}
\end{align}
From Theorem~\ref{th:Riccati} we have $U(s) = Y(s)[X(s)]^{-1}$, $\Phi(s,t)=X(s)[X(t)]^{-1}$, and $X(T)=I$
so the result follows from equation~(\ref{eq:R1-p4}).
\end{proof}

\begin{remark}
Since, as noted in Remark~\ref{rem:App}, there is an explicit representation for $X(t)$ and $Y(t)$  the integral~(\ref{eq:R1-sol}) can be computed explicitly.
\end{remark}

\vspace{5mm}

\noindent
\textbf{Acknowledgements}
\\ \noindent
C. Hyndman would like to acknowledge the financial support of the Natural Sciences and Engineering Research Council of Canada (NSERC) and the Institut de finance math\'{e}matique de Montr\'{e}al (IFM2).

\nocite{alan}

\bibliographystyle{abbrvnat} %
\bibliography{qtsm_bib}

\begin{thebibliography}{45}
\providecommand{\natexlab}[1]{#1}
\providecommand{\url}[1]{\texttt{#1}}
\expandafter\ifx\csname urlstyle\endcsname\relax
  \providecommand{\doi}[1]{doi: #1}\else
  \providecommand{\doi}{doi: \begingroup \urlstyle{rm}\Url}\fi

\bibitem[Abou-Kandil et~al.(2003)Abou-Kandil, Freiling, Ionescu, and
  Jank]{MR1997753}
H.~Abou-Kandil, G.~Freiling, V.~Ionescu, and G.~Jank.
\newblock \emph{Matrix {R}iccati Equations: In Control and Systems Theory}.
\newblock Birkh\"auser Verlag, Basel, 2003.

\bibitem[Ahn and Gao(1999)]{AhnDong}
D.-H. Ahn and B.~Gao.
\newblock A parametric nonlinear model of term structure dynamics.
\newblock \emph{Rev. Financ. Stud.}, 12\penalty0 (4):\penalty0 721--762, 1999.

\bibitem[Ahn et~al.(2002)Ahn, Dittmar, and Gallant]{QuadraticTheoryEvidence}
D.-H. Ahn, R.~F. Dittmar, and A.~R. Gallant.
\newblock Quadratic term strucutre models: theory and evidence.
\newblock \emph{Rev. Financ. Stud.}, 15\penalty0 (1):\penalty0 243--288, 2002.

\bibitem[Anderson and Moore(1971)]{MR0335000}
B.~D.~O. Anderson and J.~B. Moore.
\newblock \emph{Linear Optimal Control}.
\newblock Prentice-Hall, Inc., Englewood Cliffs, N.J., 1971.

\bibitem[Barnett(1971)]{MR0364275}
S.~Barnett.
\newblock \emph{Matrices in Control Theory: With Applications to Linear
  Programming}.
\newblock Van Nostrand Reinhold Co., London-New York-Toronto, Ont., 1971.

\bibitem[Bismut(1981{\natexlab{a}})]{MR620987}
J.-M. Bismut.
\newblock Martingales, the {M}alliavin calculus and {H}\"ormander's theorem.
\newblock In \emph{Stochastic integrals ({P}roc. {S}ympos., {U}niv. {D}urham,
  {D}urham, 1980)}, volume 851 of \emph{Lecture Notes in Math.}, pages 85--109.
  Springer, Berlin, 1981{\natexlab{a}}.

\bibitem[Bismut(1981{\natexlab{b}})]{MR629977}
J.-M. Bismut.
\newblock \emph{M\'ecanique al\'eatoire}, volume 866 of \emph{Lecture Notes in
  Mathematics}.
\newblock Springer-Verlag, Berlin-New York, 1981{\natexlab{b}}.
\newblock With an English summary.

\bibitem[Boyarchenko and Levendorski\'{\i}(2007)]{Eigen}
N.~Boyarchenko and S.~Levendorski\'{\i}.
\newblock The eigenfunction expansion method in multi-factor quadratic term
  structure models.
\newblock \emph{Math. Finance}, 17\penalty0 (4):\penalty0 503--539, 2007.

\bibitem[Chen et~al.(2004)Chen, Filipovic, and Poor]{QuadraticTermStructure}
L.~Chen, D.~Filipovic, and H.~V. Poor.
\newblock Quadratic term strucutre models for risk-free and defaultable rates.
\newblock \emph{Math. Finance}, 14\penalty0 (4):\penalty0 515--536, 2004.

\bibitem[Cox et~al.(1985)Cox, Ingersoll, and Ross]{CIR}
J.~C. Cox, J.~Ingersoll, Jonathan~E, and S.~A. Ross.
\newblock A theory of the term structure of interest rates.
\newblock \emph{Econometrica}, 53:\penalty0 385--406, 1985.

\bibitem[Dai and Singleton(2000)]{Dai}
Q.~Dai and K.~J. Singleton.
\newblock Specification analysis of affine term structure models.
\newblock \emph{J. Finance}, 55\penalty0 (5):\penalty0 1943--1978, 2000.

\bibitem[Duffie and Kan(1996)]{Duffee}
D.~Duffie and R.~Kan.
\newblock A yield-factor model of interest rates.
\newblock \emph{Math. Finance}, 6\penalty0 (4):\penalty0 379--406, 1996.

\bibitem[Duffie et~al.(2003)Duffie, Filipovi{\'c}, and
  Schachermayer]{MR1994043}
D.~Duffie, D.~Filipovi{\'c}, and W.~Schachermayer.
\newblock Affine processes and applications in finance.
\newblock \emph{Ann. Appl. Probab.}, 13\penalty0 (3):\penalty0 984--1053, 2003.

\bibitem[{El~K}aroui et~al.(1997){El~K}aroui, Peng, and Quenez]{Karoui}
N.~{El~K}aroui, S.~Peng, and M.~Quenez.
\newblock Backward stochastic differential equations in finance.
\newblock \emph{Mathematical Finance}, 7\penalty0 (1):\penalty0 1--71, 1997.

\bibitem[Elliott and Kopp(2005)]{MR2098795}
R.~J. Elliott and P.~E. Kopp.
\newblock \emph{Mathematics of Financial Markets}.
\newblock Springer-Verlag, New York, second edition, 2005.

\bibitem[Elliott and van~der Hoek(2001)]{Elliott}
R.~J. Elliott and J.~van~der Hoek.
\newblock Stochastic flows and the forward measure.
\newblock \emph{Finance Stoch.}, 5:\penalty0 511--525, 2001.

\bibitem[Elworthy(1978)]{MR517235}
K.~D. Elworthy.
\newblock Stochastic dynamical systems and their flows.
\newblock In \emph{Stochastic analysis ({P}roc. {I}nternat. {C}onf.,
  {N}orthwestern {U}niv., {E}vanston, {I}ll., 1978)}, pages 79--95. Academic
  Press, New York-London, 1978.

\bibitem[Friedman(1975)]{Friedman}
A.~Friedman.
\newblock \emph{Stochastic Differential Equations and Applications.}
\newblock Academic Press, New York, 1975.

\bibitem[Gaspar(2004)]{raquel}
R.~M. Gaspar.
\newblock General quadratic term structures of bond, futures and forward
  prices.
\newblock SSE/EFI Working Paper Series in Economics and Finance, No. 559, 2004.

\bibitem[Geman and Yor(1993)]{Geman}
H.~Geman and M.~Yor.
\newblock Bessel processes, {A}sian options, and perpetuities.
\newblock \emph{Math. Finance}, 3\penalty0 (4):\penalty0 349--375, 1993.

\bibitem[Gombani and Runggaldier(2013)]{gombani2012arbitrage}
A.~Gombani and W.~J. Runggaldier.
\newblock Arbitrage-free multifactor term structure models: A theory based on
  stochastic control.
\newblock \emph{Math. Finance}, 23\penalty0 (4):\penalty0 659--686, 2013.

\bibitem[Grasselli and Tebaldi(2007)]{Martino}
M.~Grasselli and C.~Tebaldi.
\newblock Stochastic {Jacobian} and {Riccati} {ODE} in affine term structure
  models.
\newblock \emph{Decis. Econ. Finance}, 30\penalty0 (2):\penalty0 95--108, 2007.

\bibitem[Hyndman(2005)]{thesis}
C.~B. Hyndman.
\newblock \emph{{Affine Futures and Forward Prices}}.
\newblock PhD thesis, University of Waterloo, Canada, 2005.

\bibitem[Hyndman(2007{\natexlab{a}})]{Cody2007CIR}
C.~B. Hyndman.
\newblock Forward-backward {SDEs} and the {CIR} model.
\newblock \emph{Statist. Probab. Lett.}, 77\penalty0 (17):\penalty0 1676--1682,
  2007{\natexlab{a}}.

\bibitem[Hyndman(2007{\natexlab{b}})]{hyndman:Gauss}
C.~B. Hyndman.
\newblock Gaussian factor models - futures and forward prices.
\newblock \emph{IMA J. Manag. Math.}, 18\penalty0 (4):\penalty0 353--369,
  2007{\natexlab{b}}.

\bibitem[Hyndman(2009)]{Cody2009}
C.~B. Hyndman.
\newblock A forward-backward {SDE} approach to affine models.
\newblock \emph{Math. Financ. Econ.}, 2\penalty0 (2):\penalty0 107--128, July
  2009.

\bibitem[Hyndman(2011)]{hyndman:local}
C.~B. Hyndman.
\newblock Stochastic {J}acobians in affine term-structure models: a local
  property.
\newblock \emph{Commun. Stoch. Anal.}, 5\penalty0 (2):\penalty0 419--430, 2011.

\bibitem[Karatzas and Shreve(1991)]{Karatzas}
I.~Karatzas and S.~E. Shreve.
\newblock \emph{Brownian Motion and Stochastic Calculus.}
\newblock Spinger, New York, second edition, 1991.

\bibitem[Kobylanski(2000)]{Kobylanski}
M.~Kobylanski.
\newblock Backward stochastic differential equations and partial differential
  equations with quadratic growth.
\newblock \emph{Ann. Probab.}, 28\penalty0 (2):\penalty0 558--602, 2000.

\bibitem[Kunita(1984)]{MR876080}
H.~Kunita.
\newblock Stochastic differential equations and stochastic flows of
  diffeomorphisms.
\newblock In \emph{\'{E}cole d'\'et\'e de probabilit\'es de {S}aint-{F}lour,
  {XII}---1982}, volume 1097 of \emph{Lecture Notes in Math.}, pages 143--303.
  Springer, Berlin, 1984.

\bibitem[Kunita(1990)]{MR1070361}
H.~Kunita.
\newblock \emph{Stochastic flows and stochastic differential equations},
  volume~24 of \emph{Cambridge Studies in Advanced Mathematics}.
\newblock Cambridge University Press, Cambridge, 1990.
\newblock ISBN 0-521-35050-6.

\bibitem[Lancaster and Rodman(1995)]{MR1367089}
P.~Lancaster and L.~Rodman.
\newblock \emph{Algebraic {R}iccati Equations}.
\newblock The Clarendon Press, Oxford University Press, New York, 1995.

\bibitem[Leippold and Wu(2000)]{Markus}
M.~Leippold and L.~Wu.
\newblock Quadratic term structure models.
\newblock Available at SSRN: \url{http://dx.doi.org/10.2139/ssrn.206329},
  January 2000.

\bibitem[Leippold and Wu(2002)]{Markus2}
M.~Leippold and L.~Wu.
\newblock Asset pricing under the quadratic class.
\newblock \emph{J. Financ. Quant. Anal.}, 37\penalty0 (2):\penalty0 271--295,
  2002.

\bibitem[Levendorski\'{\i}(2005)]{Pseudodiffusions}
S.~Levendorski\'{\i}.
\newblock Pseudodiffusions and quadratic term structure models.
\newblock \emph{Math. Finance}, 15\penalty0 (3):\penalty0 393--424, 2005.

\bibitem[Ma and Yong(1999)]{Ma}
J.~Ma and J.~Yong.
\newblock \emph{Forward-Backward Stochastic Differential Equations and Their
  Applications}, volume 1702 of \emph{Lecture Notes in Mathematics}.
\newblock Springer-Verlag, Berlin, 1999.

\bibitem[Nawalkha et~al.(2007)Nawalkha, Beliaeva, and Soto]{Nawalkha}
S.~K. Nawalkha, N.~A. Beliaeva, and G.~M. Soto.
\newblock \emph{Dynamic Term Structure Modeling}.
\newblock John Wiley \& Sons, Inc., Hoboken, New Jersey, 2007.

\bibitem[Pardoux and Peng(1992)]{Peng}
{\'E}.~Pardoux and S.~Peng.
\newblock Backward stochastic differential equations and quasilinear parabolic
  partial differential equations.
\newblock In \emph{Stochastic Partial Differential Equations and Their
  Applications (Charlotte, NC, 1991)}, volume 176 of \emph{Lec. Notes Control
  and Inform. Sci.}, pages 200--217. Springer, Berlin, 1992.

\bibitem[Protter(1990)]{Protter}
P.~E. Protter.
\newblock \emph{Stochastic Integration and Differential Equations: A New
  Approach}.
\newblock Springer, New York, 1990.

\bibitem[Richter(2012)]{richter:2012}
A.~Richter.
\newblock Explicit solutions to quadratic {BSDE}s and applications to utility
  maximization in multivariate affine stochastic volatility models.
\newblock \emph{arXiv:1201.2877}, 2012.

\bibitem[Shreve(2004)]{Steve}
S.~E. Shreve.
\newblock \emph{Stochastic Calculus for Finance. {II}: Continuous-Time Models}.
\newblock Springer-Verlag, New York, 2004.

\bibitem[Va{\v{s}}{\'{\i}}{\v{c}}ek(1977)]{Vas77}
O.~Va{\v{s}}{\'{\i}}{\v{c}}ek.
\newblock An equilibrium characterization of the term structure.
\newblock \emph{J. Financ. Econ.}, 5:\penalty0 177--188, 1977.

\bibitem[Yong(1999)]{Yong}
J.~Yong.
\newblock Linear forward-backward stochastic differential equations.
\newblock \emph{Appl. Math. Optim.}, 39\penalty0 (1):\penalty0 93--119, 1999.

\bibitem[Yor(2010)]{Yor}
M.~Yor.
\newblock Squared {B}essel processes.
\newblock In R.~Cont, editor, \emph{Encyclopedia of Quantitative Finance}. John
  Wiley \& Sons, Ltd, 2010.

\bibitem[Zhou(2010)]{alan}
X.~Zhou.
\newblock Stochastic flow and {FBSDE} approaches to quadratic term structure
  models.
\newblock Master's thesis, Concordia University, Canada, 2010.

\end{thebibliography}

\end{document}